\definecolor{MyBlue}{rgb}{0.12, 0.12, 0.76}
\let\oldReturn\Return
\renewcommand{\Return}{\State\oldReturn}
\newcommand{\thickhline}{%
    \noalign {\ifnum 0=`}\fi \hrule height 1.4pt
    \futurelet \reserved@a \@xhline
}
\newcolumntype{"}{@{\hskip\tabcolsep\vrule width 1.4pt\hskip\tabcolsep}}
\newtheorem{theorem}{Theorem}[section]
\newtheorem{lemma}{Lemma}[section]
\newtheorem{corollary}{Corollary}[theorem]
\newtheorem{definition}{Definition}[section]
\DeclareMathOperator*{\argmax}{arg\,max}
\newtheorem{property}	[theorem]	{Property}
\newlist{exlist}{enumerate}{1}
\setlist[exlist]{label=(\alph*)}
\newcommand{\eps}{\epsilon}
\newcommand{\sm}{\setminus}
\newcommand{\td}{{$\delta$-triangle-dense}\xspace}
\newcommand{\trans}{\tau}
\begin{document}

\title{Distribution-Free Models of Social Networks\thanks{Chapter~28 of the book {\em Beyond the
      Worst-Case Analysis of Algorithms}~\citep{bwca}.}}
\author{Tim Roughgarden\thanks{Department of Computer Science,
    Columbia University.  
Supported in part by NSF award
    CCF-1813188 and ARO award W911NF1910294.
Email: \texttt{tim.roughgarden@gmail.com.}} \and
C. Seshadhri\thanks{Department of Computer Science, University of
  California at Santa Cruz.  Supported in part by NSF TRIPODS
grant CCF-1740850, NSF grants CCF-1813165 and CCF-1909790, and ARO
award W911NF1910294.  Email: \texttt{sesh@ucsc.edu}.}}

\maketitle

\begin{abstract}
  The structure of large-scale social networks has
  predominantly been articulated using generative models, a form of
  average-case analysis.  This chapter surveys recent proposals of
  more robust models of such networks. These models
  posit deterministic and
  empirically supported combinatorial structure rather than a specific
  probability
  distribution. We discuss the formal definitions
  of these models and how they relate to empirical observations in
  social networks, as well as the known structural and algorithmic
  results for the corresponding graph classes.
\end{abstract}

\section{Introduction}\label{s:intro}

Technological developments in the 21st century have given rise to
large-scale social networks, such as the graphs defined by Facebook
friendship relationships or followers on Twitter.  Such 
networks arguably provide the most important new
application domain for graph analysis in well over a decade.

\subsection{Social Networks Have Special Structure}

There is wide consensus that social networks have predictable
structure and features, and accordingly are not well modeled by
arbitrary graphs.  
From a structural viewpoint, the most well studied and empirically
validated properties of social networks are:
\begin{enumerate}

\item A heavy-tailed degree 
distribution, such as a power-law distribution.

\item Triadic closure, meaning that pairs of vertices with a common
  neighbor tend to be directly connected---that friends of friends
  tend to be friends in their own right.

\item The presence of ``community-like structures,'' meaning subgraphs
  that are much more richly connected internally than externally.

  \item The small-world property, meaning that it's possible to travel
    from any vertex to any other vertex using remarkably few hops.

\end{enumerate}
These properties are not generally possessed by Erd\H{o}s-R\'{e}nyi random
graphs (in which each edge is present independently with some
probability~$p$); a new model is needed to capture them.

From an algorithmic standpoint, empirical results indicate that
optimization problems are often easier to solve in social networks
than in worst-case graphs.  
For example, lightweight
heuristics are unreasonably effective in practice for finding the
maximum clique or recovering dense subgraphs of a large social network.

The literature on models that capture the special structure of social
networks is almost entirely driven by the quest for generative (i.e.,  probabilistic) models
that replicate some or all of the four properties listed above.
Dozens of generative models have been proposed, and
there is little consensus about which is the
``right'' one.
The plethora of models poses a challenge to 
meaningful theoretical work on social networks---which of the models,
if any, is to be believed?  How can we be sure that a given
algorithmic or structural result is not an artifact of
the model chosen?

This chapter surveys recent research on more robust models of
large-scale social 
networks, which assume
deterministic combinatorial properties rather than a specific
generative model.  Structural and algorithmic results that rely only
on these deterministic properties automatically carry over to any
generative model that produces graphs possessing these properties (with
high probability).   Such results effectively apply ``in 
the worst case over all plausible generative models.''  
This hybrid of worst-case (over input distributions) and average-case
(with respect to the distribution) analysis resembles
several of 
the semi-random models discussed elsewhere in the book,
such as in the
preceding chapters on pseudorandom data (Chapter~26) and
prior-independent auctions (Chapter~27).

Sections~\ref{s:cclosed} and~\ref{s:td} of this chapter cover two
models of social networks that are motivated by triadic closure, the
second of the four signatures of social networks listed in
Section~\ref{s:intro}.  Sections~\ref{s:plb} and~\ref{s:bct} discuss
two models motivated by heavy-tailed degree distributions.

\section{Cliques of $c$-Closed Graphs}\label{s:cclosed}

\subsection{Triadic Closure}

Triadic closure is the property that, when two members of a social
network have a friend in common, they are likely to be friends
themselves.  In graph-theoretic terminology, two-hop paths tend to
induce triangles.

Triadic closure has been studied for decades in the social
sciences 
and there is compelling intuition for why social networks should exhibit
strong triadic closure properties.  Two people with a common friend
are much more likely to meet than two arbitrary people, and are likely
to share common interests.  They might also feel pressure to be friends
to avoid imposing stress on their relationships with their common
friend.

The data support this intuition.  Numerous large-scale studies on
online social networks
provide overwhelming empirical evidence for triadic closure.  The plot
in Figure~\ref{f:enron}, derived from the network of email
communications at the disgraced energy company Enron, is
representative.  Other social networks exhibit similar triadic closure
properties.

\begin{figure}[t]
\begin{center}
\mbox{\subfigure[Triadic closure in the Enron email network]{\epsfig{file=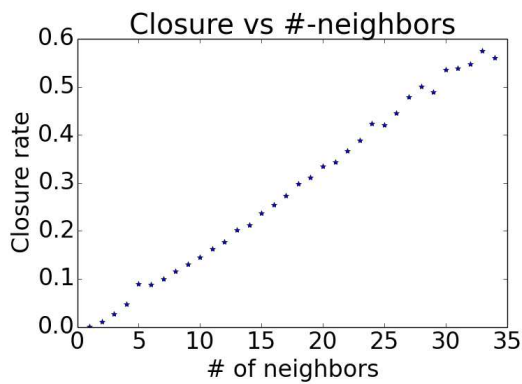,width=.45\textwidth}}\qquad
\subfigure[Triadic closure in a random
graph]{\epsfig{file=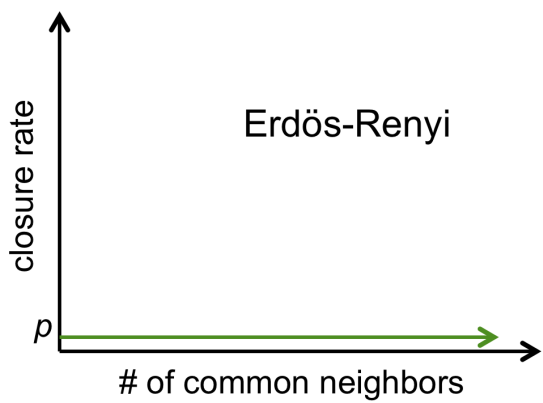,width = .45\textwidth}}}
\caption{In the Enron email graph,
vertices correspond to Enron employees, and there is an edge
  connecting two employees if one sent at least
  one email to the other.  In~(a),
vertex pairs of this graph are grouped according to the
number of common neighbors (indicated on the $x$-axis).  The
  $y$-axis shows the fraction of such pairs that are themselves
  connected by an edge.  The edge density---the fraction of arbitrary
  vertex pairs that are directly connected---is roughly~$10^{-4}$.
In~(b), a cartoon of the analogous plot for an
Erd\H{o}s-R\'{e}nyi graph with edge density~$p=10^{-4}$ is
shown. Erd\H{o}s-R\'{e}nyi graphs are not a good model for networks
like the Enron network---their closure rate is too small, and the
closure rate fails to increase
as the number of common neighbors increases.}
\label{f:enron}
\end{center}
\end{figure}

\subsection{$c$-Closed Graphs}

The most extreme version of triadic closure would assert that whenever
two vertices have a common neighbor, they are themselves neighbors:
whenever $(u,v)$ and $(v,w)$ are in the edge set~$E$, so is
$(u,w)$.  The class of graphs satisfying this property is not very
interesting---it is precisely the (vertex-)disjoint unions of
cliques---but it forms a natural base case for more interesting
parameterized definitions.\footnote{Recall that a {\em clique} of a
  graph~$G=(V,E)$ is a subset $S \subseteq V$ of vertices that are
  fully connected, meaning that
  $(u,v) \in E$ for every pair $u,v$ of distinct vertices of~$S$.}

Our first definition of a class of graphs with strong triadic closure
properties is that of {\em $c$-closed graphs}.
\begin{definition}[\citet{cclosed}] \label{d:closed}
For a positive integer~$c$, a graph $G=(V,E)$ is {\em
  $c$-closed} if, whenever $u,v \in V$ have at least $c$ common
neighbors, $(u,v) \in E$.  
\end{definition}
For a fixed number of vertices, the parameter~$c$ interpolates between
unions of cliques (when $c=1$) and all graphs (when $c=|V|-1$).  The
class of 2-closed graphs---the graphs that do not contain a square
(i.e., $K_{2,2}$) or a diamond (i.e., $K_4$ minus an edge) as an
induced subgraph---is already non-trivial.  The $c$-closed condition
is a coarse proxy for the empirical closure rates observed in social
networks (like in Figure~\ref{f:enron}), asserting that the closure
rate jumps to 100\% for vertices with~$c$ or more common neighbors.

Next is a less stringent version of the definition, which is
sufficient for the main algorithmic result of this section.
\begin{definition}[\citet{cclosed}] \label{d:weak}
For a positive integer~$c$, a vertex $v$ of a graph $G=(V,E)$ is 
{\em $c$-good} if whenever $v$ has at least $c$ common
neighbors with another vertex~$u$, $(u,v) \in E$.  
The graph~$G$ is {\em weakly $c$-closed} if every induced subgraph has
at least one $c$-good vertex.
\end{definition}
A $c$-closed graph is also weakly $c$-closed, as each of its vertices
is $c$-good in each of its induced subgraphs.  The converse is false;
for example, a path graph is not 1-closed, but it is weakly 1-closed
(as the endpoints of a path are 1-good).
Equivalent to
Definition~\ref{d:weak} is the condition that the graph~$G$ has an
elimination ordering of $c$-good vertices, meaning the vertices can be
ordered $v_1,v_2,\ldots,v_n$ such that, for every~$i=1,2,\ldots,n$,
the vertex $v_i$ is $c$-good in the subgraph induced by
$v_i,v_{i+1},\ldots,v_n$ (Exercise~\ref{exer:elim}).
Are real-world social networks $c$-closed or weakly $c$-closed for
reasonable values of~$c$?  The next table summarizes 
some representative numbers.

\begin {table}[ht]
\begin{center}
\begin{tabular}{| l | l | l | l | l | }
  		\hline
  &$n$&	$m$&	$c$&	weak $c$ \\
  \hline
  email-Enron&	36692&	183831&	161&	34 \\
  \hline
  p2p-Gnutella04	&10876	&39994	&24&	8 \\
  \hline  
wiki-Vote&7115	&103689&	420&	42\\
\hline 
ca-GrQc&5242	&14496&	41&	9\\
\hline

\end{tabular}
\end{center}
\caption{The $c$-closure and weak $c$-closure of four well-studied
  social networks from the SNAP (Stanford Large Network Dataset) collection of
  benchmarks (\protect\url{http://snap.stanford.edu/}).
``email-Enron'' is the network described in Figure~\ref{f:enron};
``p2p-Gnutella04'' is the topology of a Gnutella peer-to-peer network
circa 2002; ``wiki-Vote'' is the network of who votes on whom in 
promotion cases on Wikipedia; and ``ca-GrQc'' is the collaboration
network of authors of papers uploaded to the General Relativity and
Quantum Cosmology section of arXiv.
  For each network $G$, $n$ indicates the number of vertices, $m$ the number of edges, $c$ the smallest value $\gamma$ such that $G$ is $\gamma$-closed, and ``weak $c$'' the smallest value $\gamma$ such that $G$ is weakly $\gamma$-closed.}\label{table:numbers}
\end{table}

These social networks are $c$-closed for much
smaller values of~$c$ than the trivial bound of~$n-1$, and are weakly
$c$-closed for quite modest values of~$c$.

\subsection{Computing a Maximum Clique: A Backtracking Algorithm}\label{ss:backtracking}

Once a class of graph has been defined, such as $c$-closed graphs, a
natural agenda is to investigate
fundamental optimization problems with graphs restricted to
the class.  We single out the problem of finding the maximum-size
clique of a graph, primarily because it is one of the most central
problems in social network analysis.  In a social network, cliques can
be interpreted as the most extreme form of a community.

The problem of computing the maximum clique of a graph reduces to the
problem of enumerating the graph's maximal cliques\footnote{A maximal clique is a clique that is not a strict subset of another clique.}---the maximum
clique is also maximal, so it appears as the largest of the cliques
in the enumeration.  

How does the $c$-closed condition help with the efficient computation of a
maximum clique?
We next observe that the problem of reporting all maximal cliques is
polynomial-time solvable in $c$-closed graphs when~$c$ is a fixed
constant.  The algorithm is based on backtracking. For convenience,
we give a procedure that, for any
vertex~$v$, identifies all maximal cliques that contain $v$. (The full procedure
loops over all vertices.)
\begin{enumerate}

\item Maintain a history $H$, initially empty.

\item Let $N$ denote the vertex set comprising $v$ and all
vertices $w$ that are adjacent to both $v$ and all vertices in $H$.

\item If $N$ is a clique, report the clique $H \cup N$ and
  return.

\item Otherwise, recurse on each vertex~$w \in N \setminus \{v\}$ with
history $H := H \cup \{v\}$.  

\end{enumerate}
This subroutine reports all maximal cliques that contain $v$, whether
the graph is $c$-closed or not (Exercise~\ref{exer:maximal}).  In a
$c$-closed graph, the maximum depth of the recursion is~$c$---once
$|H|=c-1$, every pair of vertices in $N \setminus \{v\}$ has $c$
common neighbors (namely $H \cup \{v\}$) and hence $N$ must be a
clique.  The running time of the backtracking algorithm is therefore
$n^{c+O(1)}$ in $c$-closed graphs.

This simplistic backtracking algorithm is extremely slow except for
very small values of~$c$.  Can we do better?

\subsection{Computing a Maximum Clique: Fixed-Parameter Tractability}

There is a simple but clever algorithm that, for an arbitrary graph,
enumerates all of the maximal cliques while using only polynomial time
per clique.
\begin{theorem}[\citet{T+77}]\label{t:enum}
There is an algorithm that, given any input graph with $n$ vertices
and $m$ edges, outputs all of the maximal cliques of the graph
in~$O(mn)$ time per maximal clique.
\end{theorem}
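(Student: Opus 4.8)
The plan is to prove Theorem~\ref{t:enum} by a \emph{reverse-search} (backtracking) traversal of a tree whose nodes are the maximal cliques of the prefix subgraphs of $G$. Fix an arbitrary vertex ordering $v_1,\dots,v_n$; for $0\le i\le n$ let $G_i$ be the subgraph induced by $V_i=\{v_1,\dots,v_i\}$ and let $\mathcal M_i$ be the set of maximal cliques of $G_i$ (so $\mathcal M_0=\{\emptyset\}$, and $\mathcal M_n$ is exactly the desired output). The heart of the construction is a rooted tree $T$ whose depth-$i$ nodes are precisely the elements of $\mathcal M_i$, with each $C\in\mathcal M_{i+1}$ assigned a unique \emph{parent} in $\mathcal M_i$. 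The algorithm is then a depth-first search of $T$ that outputs a clique each time it reaches depth $n$; since it stores only the current root-to-node path, it runs in polynomial space.

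To build $T$, I would give each $P\in\mathcal M_i$ up to two children in $\mathcal M_{i+1}$: (A) the \emph{keep}-child $P$ itself, included exactly when $P$ remains maximal in $G_{i+1}$, i.e.\ when $v_{i+1}$ is not adjacent to all of $P$; and (B) the \emph{extend}-child $C:=(P\cap N(v_{i+1}))\cup\{v_{i+1}\}$, which is always a clique of $G_{i+1}$ but is included only when $C$ is maximal in $G_{i+1}$ \emph{and} $P$ is the \emph{canonical} predecessor of $C$, where the canonical predecessor of a maximal clique containing $v_{i+1}$ is defined as the maximal clique of $G_i$ obtained from $C\setminus\{v_{i+1}\}$ by repeatedly adjoining the lowest-index vertex that keeps it a clique. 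I would then verify the ``exactly one parent'' property: if $v_{i+1}\notin C$ then $C$ is already a maximal clique of $G_i$ and is reached only as its own keep-child; if $v_{i+1}\in C$ then $C\setminus\{v_{i+1}\}$ is a clique of $G_i$, and its greedy completion $P$ satisfies $P\cap N(v_{i+1})=C\setminus\{v_{i+1}\}$ (here one uses maximality of $C$ in $G_{i+1}$), so $C$ is the extend-child of $P$ and of no other maximal clique of $G_i$.

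Correctness then follows by induction on $i$: assuming the DFS visits exactly $\mathcal M_i$ at depth $i$, the child rules visit exactly $\mathcal M_{i+1}$ at depth $i+1$ --- using that a keep- or extend-child is always a maximal clique of $G_{i+1}$ and that, conversely, every maximal clique of $G_{i+1}$ occurs as a child --- and ``exactly one parent'' makes $T$ a genuine tree, so $\mathcal M_n$ is enumerated without repetition. For the \emph{delay} bound I would prove the key lemma that every node of $T$ at depth below $n$ has a child: if $v_{i+1}$ is not adjacent to all of $P$ then $P$ has a keep-child; otherwise $P\cup\{v_{i+1}\}$ is a clique of $G_{i+1}$ that is in fact maximal there (a vertex extending it would already extend $P$ in $G_i$) and whose greedy completion is $P$, so it is the extend-child of $P$. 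Hence every node has a descendant at depth $n$; in particular no subtree of $T$ is devoid of depth-$n$ nodes, so between two consecutive outputs the DFS only walks up to the least common ancestor of the two corresponding leaves and back down --- at most $2n$ tree edges. Finally, with adjacency-list data structures one processes a tree node (intersecting with $N(v_{i+1})$, testing maximality, and running the greedy canonical-predecessor test) in $O(m)$ time, so the total time between consecutive maximal cliques is $O(mn)$, as claimed.

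I expect the main obstacle to be the \emph{single-valuedness} of the parent map --- guaranteeing that no maximal clique of $G_{i+1}$ is produced as a child of two distinct maximal cliques of $G_i$ --- which is precisely what the canonical-predecessor (greedy-completion) device is for and which must be argued with care; by comparison, completeness is the easy ``at least one parent'' half. A secondary subtlety is the per-node $O(m)$ accounting: it relies on each required test touching only edges incident to $v_{i+1}$ or to the vertices that are candidates for membership in the relevant cliques, so that processing a single tree node never scans the whole graph more than a constant number of times.
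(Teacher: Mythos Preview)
The paper does not actually prove Theorem~\ref{t:enum}; it is quoted as a black-box result from \citet{T+77} and used only to reduce polynomial-time enumeration to the combinatorial task of bounding the number of maximal cliques. There is therefore no ``paper's own proof'' to compare against.

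That said, your proposal is a faithful and correct rendition of the original Tsukiyama--Ide--Ariyoshi--Shirakawa argument: build the prefix graphs $G_i$, organize their maximal cliques into a rooted tree via keep- and extend-children, and use the lexicographic greedy completion as the canonical predecessor to guarantee a unique parent. The two points you flag as delicate are exactly the right ones. Single-valuedness of the parent map is handled cleanly by your case split on whether $v_{i+1}\in C$, together with the verification that $P\cap N(v_{i+1})=C\setminus\{v_{i+1}\}$ (which follows from maximality of $C$ in $G_{i+1}$). The ``every internal node has a child'' lemma is also correct as you state it, and it is what makes the delay bound work: it forces every depth-$i$ node to have a depth-$n$ descendant, so distinct depth-$i$ nodes have disjoint leaf sets, whence $|\mathcal M_i|\le|\mathcal M_n|$ for all~$i$ and the total number of tree nodes is at most $(n{+}1)\,|\mathcal M_n|$. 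Combined with $O(m)$ work per node, this gives the claimed $O(mn)$ time per output clique.
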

Theorem~\ref{t:enum} reduces the problem of enumerating all maximal
cliques in polynomial time to the combinatorial task of proving
a polynomial upper bound on the number of maximal cliques.  

Computing a maximum clique of an arbitrary graph is an $NP$-hard
problem, so presumably there exist graphs with an exponential number
of maximal cliques.  The {\em Moon-Moser graphs} are a simple and
famous example.  For~$n$ a multiple of~3, the Moon-Moser graph
with~$n$ vertices is the perfectly balanced $\tfrac{n}{3}$-tite graph,
meaning the vertices are partitioned into $\tfrac{n}{3}$ groups of~3,
and every vertex is connected to every other vertex except for the~2
vertices in the same group (Figure~\ref{f:mm}).
\begin{figure}[t]
\begin{center}
\includegraphics[width=.5\textwidth]{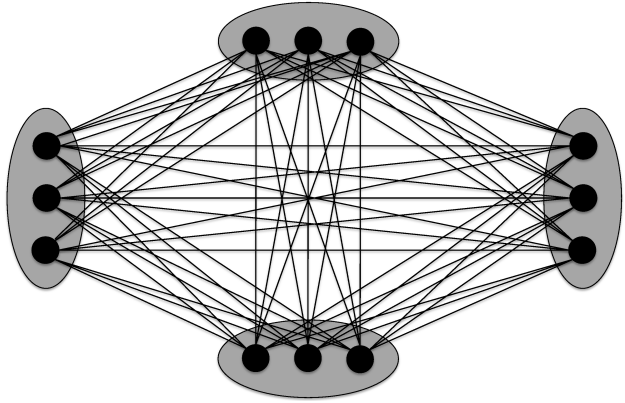}
\caption{The Moon-Moser graph with~$n=12$ vertices.}\label{f:mm}
\end{center}
\end{figure}
Choosing one vertex from each group induces a maximal clique, for a
total of $3^{n/3}$ maximal cliques, and these are all of the maximal
cliques of the graph.  More generally, a basic result in graph theory
asserts that {\em no} $n$-vertex graph can
have more than~$3^{n/3}$ maximal cliques.
\begin{theorem}[\citet{MM65}]\label{t:mm}
Every $n$-vertex graph has at most $3^{n/3}$ maximal cliques.
\end{theorem}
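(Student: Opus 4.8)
The plan is to prove the bound by strong induction on $n$, writing $f(n)$ for the maximum number of maximal cliques of an $n$-vertex graph (with the convention $f(0)=1$) and showing $f(n)\le 3^{n/3}$. The base case $n\le 1$ is immediate. For the inductive step, fix an $n$-vertex graph $G$ and let $v$ be a vertex of maximum degree $\Delta$. If $\Delta=n-1$, then $v$ is adjacent to every other vertex, so every maximal clique contains $v$ and $f(G)=f(G-v)\le 3^{(n-1)/3}\le 3^{n/3}$; hence I may assume $\Delta\le n-2$.

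The structural idea is that the set $A$ consisting of $v$ together with all of $v$'s non-neighbors is a hitting set for the maximal cliques: every maximal clique of $G$ contains at least one vertex of $A$, because a clique disjoint from $A$ would consist entirely of neighbors of $v$ and so could be enlarged by $v$, contradicting maximality. Writing $a=|A|=n-\Delta\ge 2$, I would next use the elementary fact that the maximal cliques of $G$ containing a fixed vertex $w$ correspond bijectively to the maximal cliques of the induced subgraph on $N(w)$, so there are at most $f(\deg w)\le f(\Delta)$ of them (using that $f$ is nondecreasing and $\Delta$ is the maximum degree). A union bound over the $a$ vertices of $A$ then yields
\[
  f(G)\ \le\ \sum_{w\in A} f(\deg w)\ \le\ a\cdot f(\Delta)\ =\ a\cdot f(n-a)\ \le\ a\cdot 3^{(n-a)/3},
\]
where the final inequality is the inductive hypothesis, applicable since $0\le n-a\le n-2$ (the edgeless case $\Delta=0$, where $f(G)=n$, being covered by the same chain via $f(0)=1$).

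The proof then closes provided $a\cdot 3^{(n-a)/3}\le 3^{n/3}$, i.e.\ provided $a\le 3^{a/3}$ for every integer $a\ge 2$. This purely numerical inequality is the one delicate point, and it is exactly where the constant $3^{1/3}$ matters: it is tight at $a=3$, holds with almost no room at $a=2$ (as $3^{2/3}\approx 2.08$), and in fact fails for real $a$ near $2.5$, so one must invoke integrality together with the monotonicity of $a\mapsto 3^{a/3}/a$ on $[3,\infty)$. Everything else in the argument is routine bookkeeping; the main obstacle is arranging the branching so that each branch is on a graph with at most $n-a$ vertices, which is precisely what choosing $v$ of maximum degree accomplishes. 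As a bonus, tracing the equality conditions pins down the extremal graphs as the balanced complete multipartite graphs with parts of size $3$, adjusted by a part of size $2$ or $4$ when $3\nmid n$.
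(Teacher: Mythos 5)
Your proof is correct. Note that the paper itself gives no proof of this theorem --- it is stated as a classical result cited to Moon and Moser (1965), with the Moon--Moser graphs exhibited only to show tightness --- so there is no in-text argument to compare against. Your induction is a standard and complete derivation of the bound: the hitting-set observation (every maximal clique meets $A = \{v\}\cup(V\setminus N(v))$ when $v$ has maximum degree), the bijection between maximal cliques through a vertex $w$ and maximal cliques of $G[N(w)]$, and the reduction to the numerical inequality $a \le 3^{a/3}$ for integers $a\ge 2$ are all sound, and you correctly flag the one delicate point, namely that the inequality fails for real $a$ near $2.5$ and so must be checked at $a=2$ separately before invoking monotonicity of $3^{a/3}/a$ on $[3,\infty)$. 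One small simplification is available: rather than asserting that $f$ is nondecreasing, you can bound the number of maximal cliques through $w$ directly by the strong inductive hypothesis applied to $G[N(w)]$ (which has $\deg w \le \Delta \le n-2$ vertices), giving at most $3^{\deg(w)/3} \le 3^{\Delta/3}$ and bypassing the monotonicity claim entirely.
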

A Moon-Moser graph on $n$ vertices is not $c$-closed even for $c=n-3$,
so there remains hope for a positive result for $c$-closed graphs with
small~$c$.  The Moon-Moser graphs do show that the number of maximal
cliques of a $c$-closed graph can be exponential in $c$ (since a
Moon-Moser graph on $c$ vertices is trivially $c$-closed).  Thus the
best-case scenario for enumerating the maximal cliques of a $c$-closed
graph is a fixed-parameter tractability result (with respect to the
parameter~$c$), stating that, for some function~$f$ and constant~$d$
(independent of $c$), the number of maximal cliques in an $n$-vertex
$c$-closed graph is $O(f(c) \cdot n^d)$. The next theorem shows that this is
indeed the case, even for weakly $c$-closed graphs.

\begin{theorem}[\citet{cclosed}] \label{t:c-closed} 
Every weakly $c$-closed graph with $n$ vertices has at most 
\[
3^{(c-1)/3} \cdot n^2
\] maximal cliques.
\end{theorem}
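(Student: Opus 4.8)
The natural route is an induction on the number of vertices, using the elimination ordering of $c$-good vertices provided by Definition~\ref{d:weak}, and charging each maximal clique to a pair of vertices so that the Moon--Moser bound (Theorem~\ref{t:mm}) supplies the $3^{(c-1)/3}$ factor. Fix an elimination ordering $v_1,\dots,v_n$. Given a maximal clique $C$, let $v$ be its lowest vertex, say $v=v_i$, so that $v$ is $c$-good in $G':=G[\{v_i,\dots,v_n\}]$ and $C\setminus\{v\}$ is a maximal clique of $G'[N_{G'}(v)]$. The key observation is that, \emph{if} there is a vertex $u$ of $G'$ that is non-adjacent to $v$ but adjacent to every vertex of $C\setminus\{v\}$, then $C\setminus\{v\}$ is contained in $N_{G'}(v)\cap N_{G'}(u)$, which by $c$-goodness of $v$ has at most $c-1$ vertices, and moreover $C\setminus\{v\}$ is then a \emph{maximal} clique of the induced subgraph on that set. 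So each such $C$ may be charged to the pair $(v,u)$ (say, with $u$ the lowest admissible choice); for a fixed pair the cliques charged to it inject into the maximal cliques of an induced subgraph on at most $c-1$ vertices, of which there are at most $3^{(c-1)/3}$ by Theorem~\ref{t:mm}. Summed over the at most $\binom{n}{2}$ pairs, these cliques contribute at most $3^{(c-1)/3}\binom{n}{2}$.

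It remains to account for the ``defective'' maximal cliques $C$ for which no such $u$ exists, i.e.\ for which $v$ is the \emph{unique} vertex of $G'$ adjacent to all of $C\setminus\{v\}$. Equivalently, $C\setminus\{v\}$ is itself a maximal clique of the smaller weakly $c$-closed graph $G'-v$ (one that happens to lie inside $N_{G'}(v)$), in which its lowest vertex is the second-lowest vertex of $C$. The plan is to handle these recursively: pass to $G'-v$ and repeat the dichotomy. Since each step strips the lowest vertex of the residual clique, after finitely many steps the residual clique either shrinks to at most $c-1$ vertices (and is then pinned down by a pair of vertices via Theorem~\ref{t:mm}, exactly as above, or is trivial) or collapses to the empty clique; the point is to organize this so that, in the end, every maximal clique of $G$ is accounted for by \emph{one} pair of vertices together with \emph{one} maximal clique of a graph on at most $c-1$ vertices, which multiplies out to the claimed $3^{(c-1)/3}n^2$. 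The base cases ($n=1$, and $n\le c-1$ where Theorem~\ref{t:mm} applies directly) and the facts that induced subgraphs and the subgraphs $G'-v$ again satisfy weak $c$-closure are routine.

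The step I expect to be the main obstacle is precisely the treatment of the defective cliques. The difficulty is structural: $c$-goodness of $v$ controls the common neighborhood of $v$ with a \emph{non-adjacent} vertex, so once no useful non-adjacent vertex exists the residual clique can itself be large and the recursion need not descend in a way that keeps the bookkeeping at $n^2$; indeed there are weakly $c$-closed graphs in which the neighborhood of a single $c$-good vertex already carries super-linearly many maximal cliques, so defective cliques cannot simply be absorbed into a linear-sized error term. Arranging for the recursion to collapse at the right moment --- so that each maximal clique is charged to a pair of vertices rather than a longer tuple, keeping the polynomial factor at $n^2$ --- and making sure that Theorem~\ref{t:mm} is invoked on a set of at most $c-1$ vertices rather than $c$ (this is what produces the exponent $(c-1)/3$ rather than $c/3$) are the delicate points; the rest is bookkeeping.
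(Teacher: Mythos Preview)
Your key ideas are exactly right and match the paper's proof: the witness argument, the fact that $c$-goodness of $v$ bounds its common neighborhood with any non-neighbor at $c-1$ vertices, and the application of Moon--Moser (Theorem~\ref{t:mm}) there to get the $3^{(c-1)/3}$ factor. The gap is in how you organize the induction, and it is precisely the obstacle you flag. You classify each maximal clique by its own lowest vertex and then try to chase the defective ones through a per-clique recursion, hoping each eventually lands on a pair $(v,u)$; but the map from a defective clique $C$ to the pair at which its recursion terminates forgets the stripped vertices $v_{i_1},\dots,v_{i_k}$ and is not injective, so your worry about ``a longer tuple'' is real.

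The paper dissolves this difficulty by organizing the induction differently: do a straight induction on $n$ with a \emph{single fixed} $c$-good vertex $v$, rather than letting $v$ vary per clique. The observation you are missing is that your ``defective'' cliques---those maximal $K$ for which $K\setminus\{v\}$ is already maximal in $G-v$---are in \emph{bijection} with the maximal cliques of $G-v$ (each maximal clique $C$ of $G-v$ lifts uniquely to either $C$ or $C\cup\{v\}$ in $G$). So the defective cliques are bounded in one line by the inductive hypothesis, at most $(n-1)^2\cdot 3^{(c-1)/3}$; no further recursion or bookkeeping is needed. Your witness argument then handles the non-defective (``uncounted'') cliques, but now with $v$ fixed there are at most $n$ choices for the witness $u$, giving only $n\cdot 3^{(c-1)/3}$ such cliques rather than $\binom{n}{2}\cdot 3^{(c-1)/3}$. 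The arithmetic $(n-1)^2 + n \le n^2$ closes the induction. In short, the $n^2$ arises as $n$ inductive steps times at most $n$ witnesses per step, not as $\binom{n}{2}$ pairs chosen all at once.
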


The following corollary is immediate from Theorems~\ref{t:enum} and~\ref{t:c-closed}.
\begin{corollary}
The maximum clique problem is polynomial-time solvable in
weakly $c$-closed $n$-vertex graphs with~$c = O(\log n)$.
\end{corollary}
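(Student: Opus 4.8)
The plan is to combine the enumeration algorithm of Theorem~\ref{t:enum} with the structural bound of Theorem~\ref{t:c-closed}, exploiting the fact (noted earlier) that a maximum clique is itself maximal and therefore appears in any complete enumeration of the maximal cliques. Concretely, I would run the algorithm of Theorem~\ref{t:enum} on the input graph, which outputs every maximal clique one at a time, while maintaining a running record of the largest clique seen so far and updating it whenever a strictly larger maximal clique is emitted. Upon termination this record holds a maximum clique, since some maximum clique is maximal and hence is produced at some point during the enumeration.

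It remains to bound the running time. By Theorem~\ref{t:c-closed} the number of maximal cliques is at most $3^{(c-1)/3}\cdot n^2$, and by Theorem~\ref{t:enum} each is produced in $O(mn)$ time; comparing clique sizes to maintain the running maximum adds only $O(n)$ per clique. The total running time is therefore
\[
O\!\left( 3^{(c-1)/3}\cdot n^2 \cdot mn \right) = O\!\left( 3^{c/3}\cdot mn^3 \right),
\]
which, using $m \le n^2$, is $O\!\left( 3^{c/3}\cdot n^5 \right)$.

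The only point requiring care is that the factor $3^{c/3}$ is polynomial in $n$ under the hypothesis $c = O(\log n)$. Writing $c \le \alpha\log_2 n$ for a constant $\alpha$, we have $3^{c/3} = 2^{(c/3)\log_2 3} \le n^{\alpha\log_2 3 / 3} = n^{O(1)}$, so the overall running time is $n^{O(1)}$. I do not anticipate any genuine obstacle: the corollary follows immediately from the two cited theorems, the substantive ingredients being the observation that a maximum clique is maximal (so that enumerating all maximal cliques suffices) and the routine base-change calculation confirming that $3^{c/3}$ remains polynomial in the stated regime.
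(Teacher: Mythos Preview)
Your proof is correct and follows exactly the approach the paper intends: the corollary is stated as immediate from Theorems~\ref{t:enum} and~\ref{t:c-closed}, and you have simply spelled out the routine details (enumerate all maximal cliques, keep the largest, and verify that $3^{c/3}$ is polynomial when $c=O(\log n)$).
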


\subsection{Proof of Theorem~\ref{t:c-closed}}

The proof of Theorem~\ref{t:c-closed} proceeds by induction on the
number of vertices~$n$.  (One of the factors of~$n$ in the bound is
from the~$n$
steps in this induction.)  Let~$G$ be an $n$-vertex weakly $c$-closed
graph.  Assume that $n \ge 3$; otherwise, the bound is trivial.

By assumption, $G$ has a $c$-good vertex~$v$.  By induction,
$G \sm \{v\}$ has at most $(n-1)^2 \cdot 3^{(c-1)/3}$ maximal cliques.
(An induced subgraph of a weakly $c$-closed graph is again weakly
$c$-closed.)  Every maximal clique~$C$ of $G \sm \{v\}$ gives rise to
a unique maximal clique in~$G$ (namely $C$ or $C \cup \{v\}$,
depending on whether the latter is a clique).  It remains to bound the
number of {\em uncounted} maximal cliques of~$G$, meaning the maximal
cliques~$K$ of~$G$ for which $K \sm \{v\}$ is not maximal
in~$G \sm \{v\}$.

An uncounted maximal clique $K$ must include~$v$, with $K$ contained in $v$'s
neighborhood (i.e., in the subgraph induced by $v$ and the vertices
adjacent to it).  Also, there must be a vertex $u \notin K$ such that
$K \sm \{v\} \cup \{u\}$ is a clique in~$G \sm \{v\}$; we say that~$u$
is a {\em witness} for~$K$, as it certifies the non-maximality
of~$K \sm \{v\}$ in $G \sm \{v\}$.  Such a witness must be connected
to every vertex of $K \sm \{v\}$.  It cannot be a neighbor of~$v$, as
otherwise $K \cup \{u\}$ would be a clique in~$G$, contradicting $K$'s
maximality.

Choose an arbitrary witness for each uncounted clique
of~$G$ and bucket these cliques according to their witness; recall
that all witnesses are non-neighbors of~$v$.
For every uncounted clique~$K$ with witness~$u$, all vertices of the clique
$K \sm \{v\}$ are connected to both $v$ and $u$.  Moreover,
because~$K$ is a maximal clique in~$G$, $K \sm \{v\}$ is a maximal
clique in the subgraph~$G_u$ induced by the common neighbors of $u$
and~$v$.

\begin{figure}[t]
\begin{center}
\includegraphics[width=.5\textwidth]{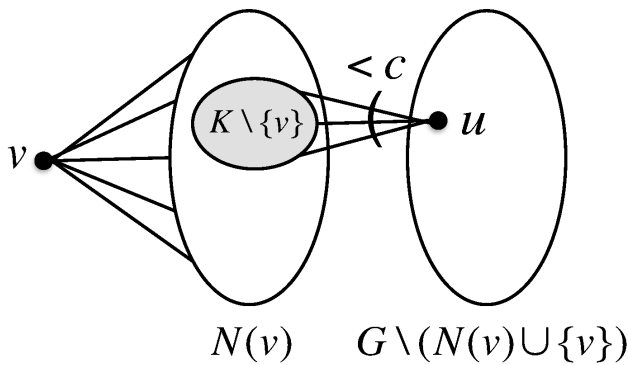}
\caption{Proof of Theorem~\ref{t:c-closed}.  $N(v)$ denotes the
  neighbors of~$v$.  $K$ denotes a maximal clique of~$G$ such that $K
  \sm \{v\}$ is not maximal in $G \sm \{v\}$.  There is a vertex~$u$,
  not connected to~$v$, that witnesses the non-maximality of $K \sm
  \{v\}$ in $G \sm \{v\}$.  Because $v$ is a $c$-good vertex,
  $u$ and $v$ have at most $c-1$ common neighbors.}\label{f:cclosed}
\end{center}
\end{figure}

How big can such a subgraph~$G_u$ be?  This is the step of the proof
where the weakly $c$-closed condition is important: Because $u$ is a
non-neighbor of~$v$ and $v$ is a $c$-good vertex, $u$ and $v$ have at
most $c-1$ common neighbors and hence~$G_u$ has at most $c-1$ vertices
(Figure~\ref{f:cclosed}).  
By the Moon-Moser theorem
(Theorem~\ref{t:mm}), each subgraph~$G_u$ has at most $3^{(c-1)/3}$
maximal cliques.  Adding up over the at most $n$ choices for~$u$, the
number of uncounted cliques is at most $n \cdot 3^{(c-1)/3}$; this sum
over possible witnesses is the source of the second factor of~$n$ in
Theorem~\ref{t:c-closed}.  Combining this bound on the uncounted cliques
with the inductive bound on the remaining maximal cliques of~$G$
yields the desired upper bound of
\[
(n-1)^2 \cdot 3^{(c-1)/3} + n \cdot 3^{(c-1)/3} \le n^2 \cdot 3^{(c-1)/3}.
\]

\section{The Structure of Triangle-Dense Graphs}\label{s:td}

\subsection{Triangle-Dense Graphs}

Our second graph class inspired by the strong triadic closure
properties of social and information networks is the class of {\em
  \td} graphs. These are graphs where a constant fraction of vertex
pairs having at least one common neighbor are directly connected by
an edge.  Equivalently, a constant fraction of the wedges (i.e.,
two-hop paths) of the graph belong to a triangle.

\begin{definition}[\citet{GRSjournal}] \label{d:dense}
The {\em triangle density} of an undirected graph~$G$ is
$\trans(G) := 3t(G)/w(G)$, where $t(G)$ and $w(G)$ denote the number
of triangles and wedges of~$G$, respectively.
(We define $\trans(G) = 0$ if $w(G)=0$.)
The class of {\em \td graphs} consists of the graphs~$G$
with $\trans(G) \geq \delta$.
\end{definition}

(In the social networks literature, this is also called the \emph{transitivity}
or the \emph{global clustering coefficient}.)
Because every triangle of a graph contains~3 wedges, and no two
triangles share a wedge, the triangle density of a graph is between~0
and~1---the fraction of wedges that belong to a
triangle.
Triangle density is another coarse proxy for the empirical
closure rates observed in social networks (like in
Figure~\ref{f:enron}(a)).

The 1-triangle-dense graphs are precisely the unions of disjoint
cliques, while triangle-free graphs constitute the 0-triangle-dense graphs.
The triangle density of an Erd\H{o}s-R\'{e}nyi graph with edge
probability~$p$ is concentrated around~$p$ (cf.,
Figure~\ref{f:enron}(b)).  For an 
Erd\H{o}s-R\'{e}nyi graph to have constant
triangle density, one would need to set $p = \Omega(1)$. This would imply that the graph is dense,
quite unlike social networks.
For example, in the
year~2011 the triangle density of the Facebook graph was computed to
be~$0.16$, which is five orders of magnitude larger than in a random
graph with the same number of vertices (roughly 1 billion at the time)
and edges (roughly 100 billion).

\subsection{Visualizing Triangle-Dense Graphs}

What do \td graphs look like?  Can we make any structural assertions
about them, akin to separator theorems for planar graphs (allowing
them to be viewed as ``approximate grids'') or the
regularity lemma for dense graphs (allowing them to viewed as
approximate unions of random bipartite graphs)?

Given that 1-triangle-dense graphs are unions of cliques, a first
guess might be that \td graphs look like the approximate union of
approximate cliques (as in Figure~\ref{f:inverse}(a)).  Such graphs
certainly have high triangle density; could there be an ``inverse
theorem,'' stating that these are in some sense the {\em only} graphs
with this property?

In its simplest form, the answer to this question is ``no,''
as \td graphs become quite diverse once $\delta$ is bounded below~1.
For example, adding a clique on $n^{2/5}$ vertices
to an arbitrary bounded-degree $n$-vertex graph produces a \td graph
with $\delta = 1 - o(1)$ as $n \rightarrow \infty$ (see  Figure~\ref{f:inverse}(b)).

\begin{figure}[t]
\begin{center}
\mbox{\subfigure[An ideal triangle-dense graph]{\epsfig{file=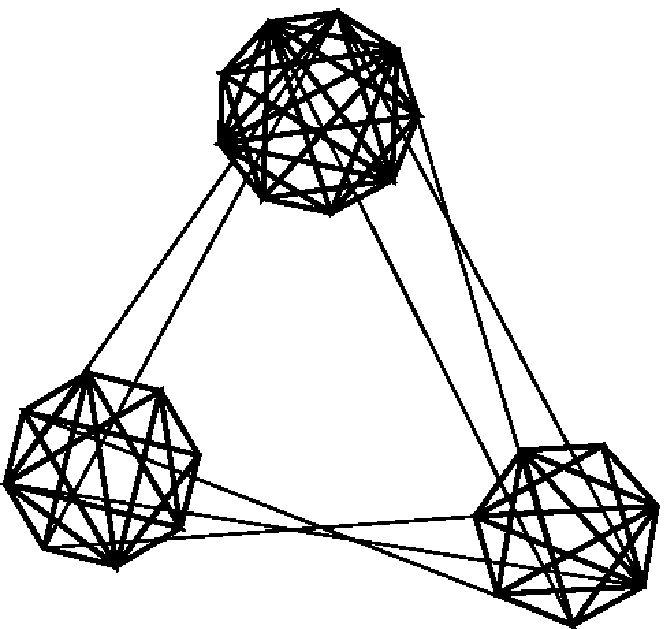,width=.325\textwidth}}\qquad\qquad
\subfigure[The lollipop graph]{\epsfig{file=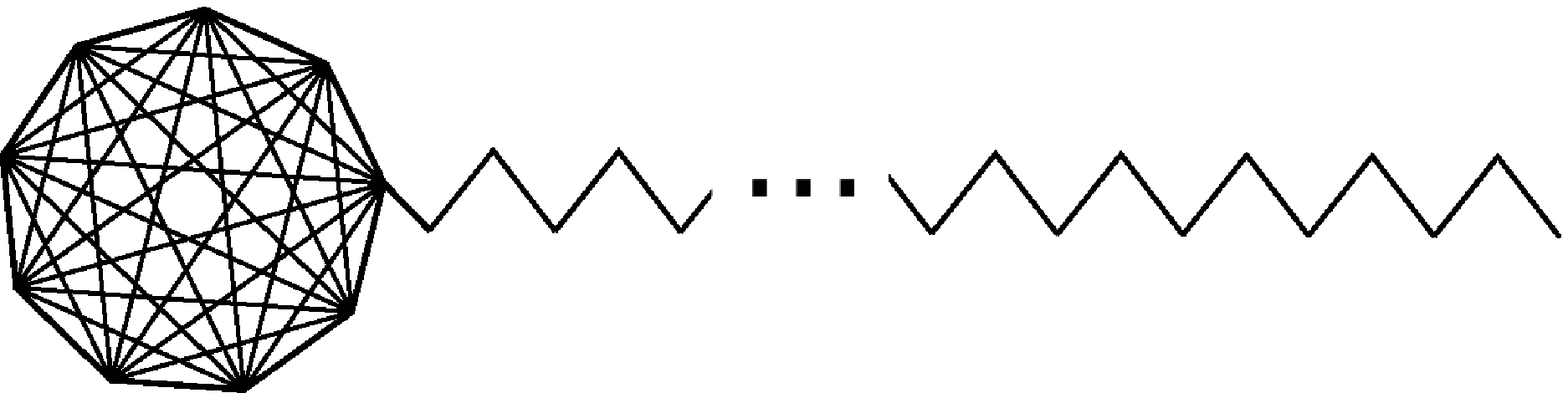,width = .425\textwidth}}}
\caption{Two examples of \td graphs with $\delta$ close to~1.}
\label{f:inverse}
\end{center}
\end{figure}

Nonetheless, an inverse theorem {\em does} hold if we redefine what it
means to approximate a graph by a collection of approximate cliques.
Instead of trying to capture most of the vertices or edges (which is
impossible, as the previous example shows), we consider the goal of
capturing a constant fraction of the {\em triangles} of a graph by a
collection of dense subgraphs.

\subsection{An Inverse Theorem}

To state an inverse theorem for triangle-dense graphs, we require a
preliminary definition.
\begin{definition}[Tightly Knit Family] \label{def:part}
Let $\rho > 0$.
A collection $V_1, V_2, \ldots, V_k$ of disjoint sets of vertices of a
graph $G=(V,E)$ forms a \emph{$\rho$-tightly-knit family} if:
\begin{enumerate}
	\item For each~$i=1,2,\ldots,k$, the subgraph induced by~$V_i$ has 
            at least $\rho \cdot \binom{|V_i|}{2}$ edges and $\rho \cdot \binom{|V_i|}{3}$ triangles.
            (That is, a $\rho$-fraction of the maximum possible edges and triangles.)
	\item For each~$i=1,2,\ldots,k$, the subgraph induced by~$V_i$
          has radius           at most $2$.
\end{enumerate}
\end{definition}

In Definition~\ref{def:part}, the vertex sets~$V_1,V_2,\ldots,V_k$ are
disjoint but need not cover all of~$V$; in particular, the empty
collection is technically a tightly knit family.

The following inverse theorem states that every
triangle-dense graph contains a tightly-knit family that captures most
of the ``meaningful social structure''---a constant fraction of the
graph's triangles.  
\begin{theorem}[\citet{GRSjournal}] \label{t:dense}
There is a function $f(\delta) = O(\delta^4)$ %for constant $d > 0$
such that for every $\delta$-triangle dense graph~$G$, there exists an $f(\delta)$-tightly-knit family that contains an $f(\delta)$ fraction
of the triangles of $G$.
\end{theorem}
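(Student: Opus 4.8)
The plan is to first replace $G$ by a subgraph in which \emph{every} edge lies in many triangles, and then to peel off the tightly-knit family one cluster at a time, each time charging the triangles destroyed by removing a cluster against the (many) triangles the cluster itself contains.

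\textbf{Cleaning.} Starting from $G$, I would repeatedly delete any edge $(u,v)$ lying in fewer than $\frac{\delta}{c}\max\{d(u),d(v)\}$ triangles of the current graph, where $c$ is a large absolute constant; call the resulting subgraph $G'$. When the process halts, every surviving edge $(u,v)$ has $|N_{G'}(u)\cap N_{G'}(v)|\ge\frac{\delta}{c}\max\{d_{G'}(u),d_{G'}(v)\}$. To see that $G'$ keeps a constant fraction of the triangles of $G$, I would count wedges: deleting $(u,v)$ destroys exactly $d(u)+d(v)-2$ wedges through that edge, of which only $2t(u,v)<\frac{2\delta}{c}(d(u)+d(v))$ lie in a triangle, so in each deletion the number of ``closed'' wedges destroyed is an $O(\delta/c)$ fraction of the number of ``open'' wedges destroyed; summing over all deletions bounds the total number of closed wedges ever destroyed by $O(\delta/c)\cdot w(G)\le O(1/c)\cdot t(G)$ (using $3t(G)=\trans(G)w(G)\ge\delta w(G)$), which is below $\tfrac12 t(G)$ once $c$ is large. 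Write $\alpha:=\delta/c=\Theta(\delta)$. The payoff is the local consequence that for every vertex $v$ the subgraph induced by $N_{G'}(v)$ has minimum degree at least $\alpha|N_{G'}(v)|$ (each neighbor $u$ keeps $\ge\alpha\max\{d(u),d(v)\}\ge\alpha\,d_{G'}(v)$ of its neighbors inside $N_{G'}(v)$), hence at least $\alpha\binom{d_{G'}(v)}{2}$ edges, and together with $v$ that many triangles.

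\textbf{Extracting one cluster.} The technical heart is an extraction lemma of roughly the following shape: any cleaned graph $H$ with a triangle contains a vertex set $W$ of radius at most $2$ such that (i) the subgraph induced by $W$ has edge density and triangle density $\Omega(\mathrm{poly}(\delta))$, and (ii) the number of triangles of $H$ with at least one vertex in $W$ is at most $O(\mathrm{poly}(1/\delta))$ times the number of triangles lying entirely inside $W$. I expect this to be the main obstacle. The trap is that edge density by itself is useless for producing triangle density (a complete bipartite neighborhood is edge-dense but triangle-free), so one cannot just take $W=N_{G'}[v]$ for a convenient $v$; one must use the triangle structure guaranteed by cleaning to locate a genuinely dense \emph{core} inside some neighborhood, and the radius-$2$ budget in Definition~\ref{def:part} is precisely the slack needed to enlarge that core to a set that also ``closes off'' most of the triangles passing through it, which is what gives the leakage bound (ii).

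\textbf{Assembling the family.} Given the extraction lemma, I would build the family greedily: extract $W_1$ from $G'$, delete its vertices (re-cleaning the remainder so the lemma applies again, a point that needs care in the full proof), extract $W_2$, and so on until no triangle remains. The $W_i$ are vertex-disjoint by construction and each is $\Omega(\mathrm{poly}(\delta))$-tightly-knit by (i). For the triangle count, charge every triangle of $G'$ to the first step at which one of its vertices is deleted; by (ii) the triangles charged to step $i$ number at most $O(\mathrm{poly}(1/\delta))$ times $t(W_i)$, the triangles captured there, so $\sum_i t(W_i)\ge\Omega(\mathrm{poly}(\delta))\cdot t(G')\ge\Omega(\mathrm{poly}(\delta))\cdot t(G)$. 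Finally, since each $W_i\sse V$ and the subgraph of $G$ induced by $W_i$ has only \emph{more} edges than the working graph (cleaning and vertex deletion only removed things), each $W_i$ still has radius at most $2$ and density $\Omega(\mathrm{poly}(\delta))$ in $G$, and the captured triangles are still triangles of $G$; so $V_1,\ldots,V_k:=W_1,\ldots,W_k$ is the desired family. Tracking the polynomial losses through the cleaning step and the two parameters of the extraction lemma is what produces the bound $f(\delta)=O(\delta^4)$.
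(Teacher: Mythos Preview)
Your high-level plan matches the paper's proof almost exactly: a \emph{cleaner} that iteratively removes edges that participate in too few triangles relative to the endpoint degrees (the paper phrases this via Jaccard similarity $|N(u)\cap N(v)|/(|N(u)\cup N(v)|-2)<\epsilon$, which is equivalent to your $\max$-degree criterion up to constants), followed by an \emph{extractor} that pulls out one radius-$2$ cluster at a time, alternating between the two until nothing is left. Your wedge-charging argument for the cleaner and your greedy assembly with per-step charging are both the paper's arguments.

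The one place where your sketch is thinner than the paper is the extraction step, which you correctly flag as the main obstacle but leave as a black box. The paper's concrete extractor is: take the \emph{maximum-degree} vertex $v$, its one-hop neighborhood $N(v)$, and then augment by the (at most $d_{\max}$) vertices $w\notin N[v]$ with the highest ``scores'' $\theta_w:=$ number of triangles on $w$ and two vertices of $N(v)$. This explicit top-$\theta$ augmentation is what makes the saved-versus-destroyed triangle comparison go through. Also, one small misdiagnosis: you suggest that $W=N_{G'}[v]$ fails because edge density alone does not yield triangle density. In fact the paper asserts (and it follows from the approximate local regularity you derived) that after cleaning every one-hop closed neighborhood already has constant edge \emph{and} triangle density; the reason one-hop neighborhoods are inadequate is purely your condition~(ii), the leakage bound---removing $N[v]$ can destroy almost all triangles of the cleaned graph (the complete tripartite graph is the standard counterexample). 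The radius-$2$ slack is there solely to tame leakage, not to manufacture triangle density.
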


Graphs that are not triangle dense, such as
sparse Erd\H{o}s-R\'{e}nyi random graphs, do not generally admit 
$\rho$-tightly-knit families with constant $\rho$.
The complete tripartite graph shows that Theorem~\ref{t:dense} does
not hold if the ``radius-2'' condition in Definition~\ref{d:dense} is
strengthened to ``radius-1'' 
(Exercise~\ref{exer:tripartite}).

\subsection{Proof Sketch of Theorem~\ref{t:dense}}\label{ss:inversepf}

The proof of Theorem~\ref{t:dense} is constructive, and interleaves
two subroutines.  To state the first, define the {\em
  Jaccard similarity} of an edge~$(u,v)$ of a graph~$G$ as the
fraction of neighbors of $u$ and $v$ that are neighbors of both:
\[
\frac{|N(u) \cap N(v)|}{|N(u) \cup N(v)| - 2},
\]
where~$N(\cdot)$ denotes the neighbors of a vertex and the ``-2'' is
to avoid counting $u$ and~$v$ themselves.  The first subroutine, called the {\em
  cleaner}, is given a parameter $\epsilon$ as input and repeatedly
deletes edges with Jaccard similarity less than $\epsilon$ until none
remain.  Removing edges from the graph is worrisome because it removes
triangles, and Theorem~\ref{t:dense} promises that the final tightly
knit family captures a constant fraction of the original graph's
triangles.  But removing an edge with low Jaccard similarity destroys
many more wedges than triangles, and the number of triangles in the
graph is at least a constant fraction of the number of wedges (because
it is \td).  A charging argument along these lines shows that,
provided~$\epsilon$ is at most $\delta/4$, the cleaner cannot destroy
more than a constant fraction of the graph's triangles.

The second subroutine, called the {\em extractor}, is responsible for
extracting one of the clusters of the tightly-knit family from a graph in
which all edges have Jaccard similarity at least~$\epsilon$.  (Isolated
vertices can be discarded from further consideration.)  How is this
Jaccard similarity condition helpful?  One easy observation is that,
post-cleaning, the graph is ``approximately locally regular,'' meaning
that the endpoints of any edge have degrees within a
$\tfrac{1}{\epsilon}$ factor of each other.  Starting from this fact,
easy algebra shows that every one-hop neighborhood of the graph (i.e.,
the subgraph induced by a vertex and its neighbors) has
constant (depending on~$\epsilon$) density in both edges and triangles,
as required by Theorem~\ref{t:dense}.  The bad news is that extracting
a one-hop neighborhood can destroy almost all of a graph's triangles
(Exercise~\ref{exer:tripartite}). 
The good news is that supplementing
a one-hop neighborhood with a judiciously chosen subset of
the corresponding two-hop neighborhood (i.e., neighbors of neighbors)
fixes the problem.  Precisely, the extractor subroutine is
given a graph~$G$ in which every edge has Jaccard similarity at least
$\epsilon$ and proceeds as follows:
\begin{enumerate}

\item Let~$v$ be a vertex of~$G$ with the maximum degree.
  Let~$d_{max}$ denote $v$'s degree and~$N(v)$ its neighbors.
  
\item Calculate a score $\theta_w$ for every vertex $w$ outside $\{v\}
  \cup N(v)$ equal to the number of triangles that include~$w$ and
  two vertices of~$N(v)$.  In other words,
  $\theta_w$ is the number of triangles that would be saved by
  supplementing the
  one-hop neighborhood~$\{v\} \cup N(v)$ by~$w$.  (On the flip
  side, this  would also destroy the triangles that contain~$w$ and two
  vertices outside~$N(v)$.)

\item Return the union of $\{v\}$, $N(v)$, and up to~$d_{max}$ vertices
  outside $\{v\} \cup N(v)$ with the largest non-zero $\theta$-scores.

\end{enumerate}
It is clear that the extractor outputs a set~$S$ of vertices that
induces a subgraph with radius at most~2.  As with one-hop
neighborhoods, easy algebra shows that, because every edge has Jaccard
similarity at least $\epsilon$, this subgraph is dense in both edges
and triangles.  The important non-obvious fact, whose proof is omitted
here, is that the number of triangles saved by the extractor (i.e.,
triangles with all three vertices in its output) is at least a
constant fraction of the number of triangles it destroys (i.e.,
triangles with one or two vertices in its output).
It follows that alternating between cleaning and extracting (until no
edges remain) will produce a tightly-knit family meeting the promises
of Theorem~\ref{t:dense}.

\section{Power-Law Bounded Networks}\label{s:plb}

\newcommand{\Clm}[1]{{Claim~\ref{clm:#1}}} %example
\newcommand{\Lem}[1]{{Lemma\,\ref{lem:#1}}} %lemma
\newcommand{\Def}[1]{{Definition~\ref{def:#1}}} %definition

Arguably the most famous 
property of social and information networks, even more so than triadic
closure, is a
\emph{power-law degree distribution}, also referred to as a
heavy-tailed or scale-free degree distribution.

\subsection{Power-Law Degree Distributions and Their Properties}\label{ss:calcs}

Consider a simple graph $G = (V,E)$ with $n$
vertices.  For each positive integer $d$, let~$n(d)$ denote the number of
vertices of~$G$ with degree $d$. The sequence
$\{n(d)\}$ 
is called the \emph{degree distribution} of $G$. 
Informally, a degree distribution 
is said to be a \emph{power-law with exponent $\gamma > 0$} if 
$n(d)$ scales as $n/d^{\gamma}$.

There is some controversy about how to best fit power-law
distributions to data, and whether such distributions are the
``right" fit for the degree distributions in real-world social networks
(as opposed to, say, lognormal distributions).
Nevertheless, several of the consequences of a power-law degree
distribution assumption are uncontroversial for social networks,
and so a power-law distribution is a reasonable starting point for
mathematical analysis.

This section studies the algorithmic benefits of assuming that a graph
has an (approximately) power-law degree distribution, in the form of
fast algorithms for fundamental graph problems.  To
develop our intuition
about such graphs,
let's do some rough calculations under the assumption that $n(d) =
cn/d^\gamma$ (for some constant $c$) for every~$d$ up to the maximum
degree~$d_{max}$; think of $d_{max}$ as $n^{\beta}$ for some constant
$\beta \in (0,1)$.

First, we have the implication
\begin{equation}\label{eq:plb1}
\sum_{d \leq d_{max}} n(d) = n   \ \ \ \Longrightarrow \ \ \ cn \sum_{d \leq d_{max}} d^{-\gamma} = n.
\end{equation}

When $\gamma \leq 1$, $\sum_{d < \infty} d^{-\gamma}$ is a divergent
series. In this case, we cannot satisfy the right-hand side
of~\eqref{eq:plb1} with a constant $c$.
For this reason, results on power-law degree distributions typically
assume that $\gamma > 1$.  

Next, 
the  number of edges is exactly
\begin{equation}\label{eq:plb2}
\frac{1}{2} \sum_{d \leq d_{max}} d \cdot n(d) = \frac{cn}{2} \sum_{d \leq d_{max}}
d^{-\gamma+1}.
\end{equation}
Thus, up to constant factors, $\sum_{d \leq d_{max}} d^{-\gamma+1}$ is
the average degree.  For $\gamma > 2$,
$\sum_{d < \infty} d^{-\gamma+1}$ is a convergent series, and the
graph has constant average degree. For this reason, much of the early
literature on graphs with power-law degree distributions focused on
the regime where $\gamma > 2$. When $\gamma = 2$, the average degree
scales with $\log n$, and for $\gamma \in (1,2)$, 
it scales with $(d_{max})^{2-\gamma}$, which is polynomial in $n$.

One of the primary implications of a power-law degree distribution is
upper bounds on the number of high-degree vertices.
Specifically, under our assumption that $n(d) = cn/d^{\gamma}$,
the number of vertices of degree \emph{at least} $k$ can be bounded by
\begin{equation}
\sum_{d = k}^{d_{max}} n(d) \leq cn \sum_{d = k}^\infty d^{-\gamma}
\leq cn \int^\infty_{k} x^{-\gamma} \, dx = cnk^{-\gamma+1}/(\gamma-1) = \Theta(n k^{-\gamma+1}). \label{eq:tail}
\end{equation}

\subsection{PLB Graphs}

The key definition in this section is a more plausible and robust
version of the assumption that $n(d) = cn/d^{\gamma}$, for which
the conclusions of calculations like those in Section~\ref{ss:calcs}
remain valid.  The definition allows individual values of $n(d)$ to
deviate from a true power law, while requiring (essentially) that the
average value of $n(d)$ in sufficiently large intervals of $d$ does
follow a power law.

\begin{definition}[\citet{B+15,plb}] \label{def:plb} A graph $G$  with degree
  distribution $\{ n(d) \}$ is a \emph{power-law bounded (PLB) graph
  with exponent $\gamma > 1$} if there is a constant $c > 0$ such
that
\[
\sum_{d = 2^r}^{2^{r+1}} n(d) \leq cn \sum_{d = 2^r}^{2^{r+1}}
  d^{-\gamma}
\]
for all $r \ge 0$.
\end{definition}
Many real-world social networks satisfy a mild generalization of this
definition, in which $n(d)$ is allowed to scale with
$n/(d+t)^{\gamma}$ for a ``shift''~$t \ge 0$; see the Notes for details.
For simplicity, we continue to assume in this section that $t=0$.

Definition~\ref{def:plb} has several of the same implications as a
pure power law assumption, including the following lemma
(cf.~\eqref{eq:plb2}).

\begin{lemma} \label{lem:plb} Suppose $G$ is a PLB graph with exponent
  $\gamma > 1$. For every $c > 0$ and natural  number $k$,
\[
\sum_{d \leq k} d^c \cdot n(d) = O\left(n \sum_{d \leq k} d^{c-\gamma}\right).
\]
\end{lemma}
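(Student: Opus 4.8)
The plan is to prove the bound by a dyadic decomposition of the summation range, exploiting that $d^c$ is essentially constant on each dyadic block so that the PLB hypothesis (Definition~\ref{def:plb}) can be applied directly. Write $a>0$ for the constant witnessing that $G$ is PLB (this is the ``$c$'' of Definition~\ref{def:plb}, renamed to avoid a clash with the exponent $c$ in the lemma). Let $R=\lfloor\log_2 k\rfloor$, and for $r=0,1,\dots,R$ let $B_r=\{2^r,2^r+1,\dots,2^{r+1}-1\}$; these blocks are disjoint and their union $\{1,\dots,2^{R+1}-1\}$ contains $\{1,\dots,k\}$. Since every term is nonnegative, $\sum_{d\le k} d^c n(d)\le\sum_{r=0}^{R}\sum_{d\in B_r} d^c n(d)$, so it suffices to bound each block sum and add up.

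The heart of the argument is the per-block estimate. For $d\in B_r$ one has $2^{rc}\le d^c\le 2^{c}\cdot 2^{rc}$, and the left-hand side of the PLB inequality for index $r$ has only more terms than $\sum_{d\in B_r} n(d)$, so
\[
\sum_{d\in B_r} d^c n(d)\ \le\ 2^{c}\cdot 2^{rc}\sum_{d=2^r}^{2^{r+1}-1} n(d)\ \le\ 2^{c}\cdot 2^{rc}\, a n\sum_{d=2^r}^{2^{r+1}} d^{-\gamma}\ \le\ 2^{c+1} a n\sum_{d=2^r}^{2^{r+1}-1} 2^{rc} d^{-\gamma}\ \le\ 2^{c+1} a n\sum_{d\in B_r} d^{c-\gamma},
\]
where the third inequality absorbs the single extra summand $(2^{r+1})^{-\gamma}$ (which is dominated by any term already present in the block sum) into a factor of $2$, and the last uses $2^{rc}\le d^{c}$ on $B_r$. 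Summing over $r$ and using that the $B_r$ partition $\{1,\dots,2^{R+1}-1\}$ gives $\sum_{d\le k} d^c n(d)\le 2^{c+1} a n\sum_{d=1}^{2^{R+1}-1} d^{c-\gamma}$.

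It remains to replace the upper limit $2^{R+1}-1$ by $k$. Because $2^R\le k$, this amounts to the statement that enlarging the range of $\sum_d d^{c-\gamma}$ by a factor of at most two changes its value by at most a constant factor (depending only on $c-\gamma$): $\sum_{d=1}^{2^{R+1}-1} d^{c-\gamma}=O\bigl(\sum_{d=1}^{2^{R}} d^{c-\gamma}\bigr)=O\bigl(\sum_{d\le k} d^{c-\gamma}\bigr)$. This is a routine comparison of sums with integrals, checked by splitting on whether $c-\gamma\ge 0$, $-1\le c-\gamma<0$, or $c-\gamma<-1$ (in the last case the full series $\sum_d d^{c-\gamma}$ converges, so the claim is immediate from $\sum_{d\le k} d^{c-\gamma}\ge 1$). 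Combining this with the previous paragraph finishes the proof, with an implied constant depending only on $c$, $\gamma$, and $a$.

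I expect this last bookkeeping step to be the only genuine subtlety. The dyadic blocking itself is routine, but one cannot compare the two sides of the claimed bound term by term, since the values $n(d)$ within a block are otherwise unconstrained; and a naive treatment of the last, possibly incomplete, block --- comparing it against only the $d^{c-\gamma}$-mass of that same block --- loses a factor of order $k$ (take $k=2^R$ and concentrate all the degree mass at $d=2^R$). Routing the comparison through the \emph{full} sum $\sum_{d\le k} d^{c-\gamma}$, as above, is what avoids this.
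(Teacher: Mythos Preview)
Your proof is correct. The paper itself does not prove this lemma---it only remarks that the argument is ``technical but not overly difficult'' and omits the details---so there is nothing to compare against directly. Your dyadic decomposition is exactly the intended approach: the paper's hint for Exercise~\ref{exer:plb} (which asks the reader to prove the companion Lemma~\ref{clm:plb_bound}) says to ``break up the sum over degrees into sub-sums between powers of~$2$'' and apply Definition~\ref{def:plb} to each, which is what you do. Your closing observation---that the last, possibly incomplete, block must be compared against the full sum $\sum_{d\le k} d^{c-\gamma}$ rather than just its own tail---is the one genuine subtlety, and you handle it correctly.
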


The proof of Lemma~\ref{lem:plb} is technical but not overly
difficult; we do not discuss the details here.

The first part of the next lemma provides control over the number of  high-degree vertices
and is the primary reason why many graph problems are more easily
solved on PLB graphs than on general graphs.  The second part of the
lemma bounds the number of wedges of the graph when $\gamma \ge 3$.

\begin{lemma} \label{clm:plb_bound} Suppose $G$ is a PLB graph with exponent $\gamma > 1$. Then:
\begin{itemize}
    \item [(a)] $\sum_{d \geq k} n(d) = O(n k^{-\gamma+1})$.
    \item [(b)] Let $W$ denote the number of wedges (i.e., two-hop paths).
If $\gamma = 3$, $W = O(n\log n)$. If $\gamma > 3$, $W = O(n)$.
\end{itemize}
\end{lemma}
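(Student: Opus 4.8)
The plan is to prove both parts by decomposing the degree range into dyadic blocks $[2^r,2^{r+1}]$, on which Definition~\ref{def:plb} gives direct control, and then summing the resulting geometric series (for part~(a), and for part~(b) when $\gamma>3$) or harmonic series (for part~(b) when $\gamma=3$).

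\textbf{Part (a).} Fix $k\ge 1$ and let $r_0=\lfloor\log_2 k\rfloor$, so every integer $d\ge k$ lies in a block $[2^r,2^{r+1}]$ with $r\ge r_0$. Since all $n(d)\ge 0$, extending the range of summation and letting the blocks overlap at powers of two only inflates the sum, so by the PLB inequality on each block,
\[
\sum_{d\ge k} n(d)\ \le\ \sum_{r\ge r_0}\sum_{d=2^r}^{2^{r+1}} n(d)\ \le\ cn\sum_{r\ge r_0}\sum_{d=2^r}^{2^{r+1}} d^{-\gamma}.
\]
Each inner sum has at most $2^r+1\le 2\cdot 2^r$ terms, each at most $(2^r)^{-\gamma}$, hence is at most $2\cdot 2^{r(1-\gamma)}$. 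Because $\gamma>1$, the outer sum $\sum_{r\ge r_0}2^{r(1-\gamma)}$ is geometric with ratio $2^{1-\gamma}<1$ and equals $\Theta\!\left(2^{r_0(1-\gamma)}\right)=\Theta\!\left(k^{1-\gamma}\right)$ (using $2^{r_0}=\Theta(k)$ and $1-\gamma<0$). Altogether $\sum_{d\ge k}n(d)=O(nk^{1-\gamma})$, as claimed.

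\textbf{Part (b).} The number of wedges centered at a vertex $w$ of degree $d_w$ is exactly $\binom{d_w}{2}$, so
\[
W\ =\ \sum_{w\in V}\binom{d_w}{2}\ =\ \sum_{d\ge 2} n(d)\binom{d}{2}\ \le\ \tfrac12\sum_{d=1}^{d_{max}} n(d)\,d^2 .
\]
I bound $\sum_{d\le d_{max}} n(d)d^2$ by the same dyadic decomposition (equivalently, one may invoke Lemma~\ref{lem:plb} with exponent $2$, which gives $\sum_{d\le d_{max}} n(d)d^2=O\!\big(n\sum_{d\le d_{max}} d^{2-\gamma}\big)$). On block $[2^r,2^{r+1}]$ we have $d^2\le 4\cdot 4^r$ and, by the PLB inequality, $\sum_{d=2^r}^{2^{r+1}} n(d)\le 2cn\,2^{r(1-\gamma)}$, so block $r$ contributes $O\!\left(n\,2^{r(3-\gamma)}\right)$. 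Summing over $r=0,1,\dots,\lfloor\log_2 d_{max}\rfloor$: if $\gamma>3$ the exponent $3-\gamma$ is negative, the series converges, and $W=O(n)$; if $\gamma=3$ every term is $O(n)$ and there are $O(\log d_{max})=O(\log n)$ of them (since $d_{max}\le n-1$ in a simple graph), giving $W=O(n\log n)$.

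The arguments are short, and there is no deep obstacle; the points requiring care are the bookkeeping at the dyadic boundaries — the PLB blocks overlap at powers of two and the block containing $k$ may dip below $k$, but both slips only inflate an upper bound — together with the use of $d_{max}\le n-1$ to turn $\log d_{max}$ into $\log n$ in the $\gamma=3$ case. The one conceptual point is recognizing that $\gamma=3$ is precisely the threshold at which $\sum_d d^{2-\gamma}$ stops converging, which is why part~(b) makes no claim for $\gamma<3$: there the same computation yields only the polynomial bound $W=O\!\left(n\,d_{max}^{3-\gamma}\right)$.
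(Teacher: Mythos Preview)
Your proof is correct and follows precisely the route the paper indicates: the paper defers the proof to Exercise~\ref{exer:plb}, whose hint for part~(a) is to ``break up the sum over degrees into sub-sums between powers of~2'' and apply Definition~\ref{def:plb} to each, and for part~(b) the paper points to Lemma~\ref{lem:plb} with $c=2$ --- exactly what you do (and you also give the direct dyadic argument, which is how Lemma~\ref{lem:plb} itself is proved). The bookkeeping around the block containing $k$ and the overlapping endpoints is handled correctly.
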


Part~(a) extends the computation in~\eqref{eq:tail} to PLB graphs, while
part~(b) follows from Lemma~\ref{lem:plb} (see Exercise~\ref{exer:plb}).

\subsection{Counting Triangles} \label{sec:tri_count} 

Many graph problems appear to be easier in PLB graphs than in general
graphs.  To illustrate this point, we single out the problem of {\em
  triangle counting}, which is one of the most canonical problems in
social network analysis.
For this section, we assume that our algorithms can determine in constant
time if there is an edge between a given pair of vertices;
these lookups can be avoided with a careful 
implementation (Exercise~\ref{exer:lookup}), but such details distract
from the main analysis.

As a warm up, consider the following trivial algorithm to count
(three times) the number of triangles of a given graph~$G$
(``Algorithm~1''):
\begin{itemize}

\item For every vertex $u$ of~$G$:

\begin{itemize} 

\item For every pair $v,w$ of $u$'s neighbors, check if
  $u$, $v$, and $w$ form a triangle.

\end{itemize}

\end{itemize}
Note that the running time of Algorithm~1 is proportional to the
number of wedges in the graph~$G$.
The following running time bound for triangle counting in PLB graphs
is an immediate corollary of Lemma~\ref{clm:plb_bound}(b), applied to
Algorithm~1.

\begin{corollary}
  Triangle counting in $n$-vertex PLB graphs with exponent $3$ can be
  carried out in $O(n\log n)$ time.  If the exponent is strictly
  greater than $3$, it can be done in $O(n)$ time.
\end{corollary}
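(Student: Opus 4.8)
The plan is to run Algorithm~1 and charge its cost to the wedge count of the input graph, then quote Lemma~\ref{clm:plb_bound}(b); the algorithm itself never needs to know that the graph is PLB, so no verification step is required. First I would settle correctness: the outer loop visits each vertex once, and a triangle $\{u,v,w\}$ is detected precisely when the outer loop is at one of its three vertices and the other two turn up as the neighbor pair under consideration, so Algorithm~1 reports each triangle exactly three times and reports nothing else; dividing the count by~$3$ recovers $t(G)$.

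Next I would bound the running time. For a fixed outer vertex~$u$, the inner loop body runs $\binom{\deg(u)}{2}$ times, so the total number of inner iterations is $\sum_{u \in V} \binom{\deg(u)}{2} = W$, the number of wedges of~$G$. Under the section's standing assumption that an adjacency query is $O(1)$, each inner iteration is $O(1)$, contributing $O(W)$ overall. The only other cost is scanning the $n$ vertices and listing their neighbors, which is $O(n + m)$; to see that this is dominated, note that in both regimes of the corollary $\gamma > 2$, so Lemma~\ref{lem:plb} with $c = 1$ gives $2m = \sum_{d \ge 1} d\,n(d) = O(n \sum_{d \ge 1} d^{1-\gamma}) = O(n)$, the last series converging because $1-\gamma < -1$. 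Hence Algorithm~1 runs in $O(W + n)$ time on a PLB graph.

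Finally I would substitute the wedge bounds: Lemma~\ref{clm:plb_bound}(b) gives $W = O(n\log n)$ when $\gamma = 3$ and $W = O(n)$ when $\gamma > 3$, so the running time is $O(n\log n)$ and $O(n)$ respectively. There is no real obstacle here --- all the substance sits inside Lemma~\ref{clm:plb_bound}(b) --- and the only point needing any care is the bookkeeping that the $O(n+m)$ traversal overhead does not dominate, which is exactly what the $\gamma > 2$ consequence $m = O(n)$ guarantees.
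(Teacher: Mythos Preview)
Your proposal is correct and follows exactly the paper's approach: the paper states this as an immediate corollary of Lemma~\ref{clm:plb_bound}(b) applied to Algorithm~1, whose running time is proportional to the number of wedges. Your additional bookkeeping that the $O(n+m)$ traversal overhead is dominated (via $m = O(n)$ when $\gamma > 2$) is a nice touch that the paper leaves implicit.
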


Now consider an optimization of Algorithm~1 (``Algorithm~2''):
\begin{itemize}

\item Direct each edge of $G$ from the lower-degree endpoint to the
  higher-degree endpoint (breaking ties lexicographically) to obtain a
  directed graph~$D$.

\item For every vertex $u$ of~$D$:

\begin{itemize} 

\item For every pair $v,w$ of $u$'s {\em out-neighbors}, check if
  $u$, $v$, and $w$ form a triangle in~$G$.

\end{itemize}

\end{itemize}
Each triangle is counted exactly once by Algorithm~2, in the
iteration where the lowest-degree of its three vertices plays the role
of~$u$.
Remarkably, this simple idea leads to massive time savings in
practice.

A classical way to capture this running time improvement
mathematically is to parameterize the input graph~$G$ by its {\em
  degeneracy}, which can be thought of as a refinement of
the maximum degree.  The degeneracy $\alpha(G)$ of a graph~$G$ can be
computed by iteratively removing a minimum-degree vertex (updating
the vertex degrees after each iteration) until no
vertices remain; $\alpha(G)$ is then the largest degree of a 
vertex at the time of its removal.  (For example, every tree has
degeneracy equal to~1.)  We have the following
guarantee for Algorithm~2, parameterized by a graph's degeneracy:
\begin{theorem}[\citet{ChNi85}]
For every graph with~$m$ edges and degeneracy~$\alpha$, the running
time of Algorithm~2 is~$O(m \alpha)$.
\end{theorem}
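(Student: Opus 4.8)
The plan is to analyze Algorithm~2 by bounding, for each vertex $u$ of the directed graph $D$, its out-degree $d^+(u)$, since the work done in the iteration for $u$ is $O((d^+(u))^2)$ (checking all pairs of out-neighbors). The total running time is then $O(\sum_u (d^+(u))^2)$, and I want to show this is $O(m\alpha)$. The crucial structural fact I would establish first is that in $D$, every vertex has out-degree at most $\alpha$, i.e., $d^+(u) \le \alpha(G)$ for all $u$. Given this, I would write $\sum_u (d^+(u))^2 \le \alpha \sum_u d^+(u) = \alpha \cdot m$, since the out-degrees sum to the number of edges. That yields the claimed $O(m\alpha)$ bound, plus the $O(m)$ overhead for constructing $D$ (sorting by degree, orienting edges), which is lower-order.

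\textbf{The out-degree bound.} The heart of the argument is showing $d^+(u) \le \alpha$ for every $u$. I would prove this by connecting the degeneracy ordering to the edge orientation. Recall $\alpha(G)$ is witnessed by the ``smallest-last'' elimination ordering: there is an ordering $v_1, \dots, v_n$ such that each $v_i$ has at most $\alpha$ neighbors among $\{v_{i+1}, \dots, v_n\}$. The key observation is that if an edge $(u,w)$ is oriented from $u$ to $w$ in $D$ (because $\deg(u) < \deg(w)$, or equal degree with $u$ lexicographically earlier), then $w$ appears \emph{later} than $u$ in this degeneracy ordering. This is the step I expect to be the main obstacle, because it is not literally true for an arbitrary degeneracy ordering --- a vertex of small degree can be eliminated late. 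The clean fix is to observe that we do not need the orientation to agree with a \emph{fixed} degeneracy ordering; instead, I would argue directly. Consider the out-neighbors of $u$: these are vertices $w$ with $\deg(w) > \deg(u)$ (or tied, lexicographically later). Among the first vertices removed in the smallest-last process, when $u$ (or one of its out-neighbors of minimum degree in the relevant induced subgraph) is removed, at that moment the remaining graph has minimum degree $\le \alpha$, so $u$ has $\le \alpha$ neighbors remaining; and one can check that all of $u$'s out-neighbors are still present (because they have degree at least as large as $u$ in $G$, hence were not removed before $u$ in a process that always removes a minimum-degree vertex). Making this ``not removed before $u$'' claim precise --- handling the tie-breaking and the fact that degrees change as vertices are deleted --- is the delicate part and is where I would spend the most care.

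\textbf{Wrapping up.} Once $d^+(u) \le \alpha$ is secured, the rest is the one-line calculation above. I would also note that each triangle is counted exactly once (stated in the excerpt: in the iteration where its minimum-degree vertex plays the role of $u$, both other vertices are out-neighbors of it), so the algorithm is correct; and I would remark that the edge-existence queries are assumed $O(1)$ per the section's convention. The final running time is $O(m + \sum_u (d^+(u))^2) = O(m + m\alpha) = O(m\alpha)$, using $\alpha \ge 1$ for graphs with at least one edge. If the ``not removed before $u$'' argument turns out to need a genuinely different orientation than the one in Algorithm~2's statement, the fallback is the standard lemma that \emph{any} acyclic orientation of $G$ in which out-degrees are minimized achieves max out-degree exactly $\alpha$, combined with the observation that the degree-based orientation of Algorithm~2, while not optimal, still has max out-degree $O(\alpha)$ --- it suffices to lose a constant factor, which is absorbed by the big-$O$.
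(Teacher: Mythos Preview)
The paper does not give a proof of this theorem---it is simply cited as a classical result of Chiba and Nishizeki---so there is no proof in the paper to compare against. Your proposal, however, has a genuine gap.

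The claim at the heart of your argument, that $d^+(u)\le\alpha$ (or even $d^+(u)=O(\alpha)$) in the degree-based orientation, is false. Consider a vertex $v$ joined to $w_1,\ldots,w_k$, with each $w_i$ further joined to $k$ private leaves. Then $\deg(v)=k$, $\deg(w_i)=k+1$, and the leaves have degree~$1$, so every edge incident to $v$ is oriented out of $v$ and $d^+(v)=k$. But the degeneracy of this graph is~$1$: peel off the leaves, then the $w_i$ (now degree~$1$), then $v$. Your ``not removed before $u$'' heuristic fails exactly here---the $w_i$ have higher degree than $v$ in $G$ but shed that degree immediately once peeling begins and are removed before $v$. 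Your fallback claim, that the degree-based orientation nonetheless has maximum out-degree $O(\alpha)$, is refuted by the same example.

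The fix, which is Chiba and Nishizeki's actual argument, is to abandon any per-vertex out-degree bound and instead use only the trivial $d^+(u)\le\deg_G(u)$:
\[
\sum_u (d^+(u))^2 \;\le\; \sum_u d^+(u)\,\deg_G(u) \;=\; \sum_{\{u,v\}\in E}\min\bigl(\deg_G(u),\deg_G(v)\bigr),
\]
and then prove the combinatorial lemma that this last sum is at most $2a(G)\,m$, where $a(G)\le\alpha(G)$ is the arboricity. (Partition $E$ into $a(G)$ forests via Nash--Williams, root each tree, and charge each edge's $\min$ term to its child endpoint; every vertex is a child at most once per forest and is charged at most its own degree each time, giving $\sum_v a(G)\deg(v)=2a(G)m$.) In the counterexample above, $\sum_u(d^+(u))^2=2k^2$ while $m\alpha=k(k+1)$, so the theorem holds even though your lemma does not.
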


Every PLB graph with exponent $\gamma > 1$ has degeneracy
$\alpha = O(n^{1/\gamma})$; see Exercise~\ref{exer:plbdegen}.  For
PLB graphs with~$\gamma > 2$, we can apply Lemma~\ref{lem:plb} with $c=1$
to obtain $m=O(n)$ and hence the running time of Algorithm~2 is
$O(m\alpha) = O(n^{(\gamma+1)/\gamma})$.

Our final result for PLB graphs improves this running time bound, for
all $\gamma \in (2,3)$, through a more refined analysis.\footnote{The
  running time bound
actually holds for all $\gamma \in (1,3)$, but is an improvement only
for $\gamma > 2$.}

\begin{theorem}[\citet{plb}] \label{thm:tricount_cn} 
In PLB graphs with exponent $\gamma \in (2,3)$, 
Algorithm~2 runs in $O(n^{3/\gamma})$ time.
\end{theorem}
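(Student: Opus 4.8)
The plan is to bound the running time of Algorithm~2 by the quantity $\sum_u d^+(u) \cdot d^-(u)$ — or more simply $\sum_u \binom{d^+(u)}{2}$ — where $d^+(u)$ and $d^-(u)$ are the out- and in-degrees of $u$ in the orientation $D$, and then to show this sum is $O(n^{3/\gamma})$ using the PLB tail bound (Lemma~\ref{clm:plb_bound}(a)) together with the arithmetic control provided by Lemma~\ref{lem:plb}. The key structural fact about the low-to-high orientation is that a vertex $u$ of degree $d(u) = \ell$ can have out-degree at most $\min\{\ell,\; \text{(number of vertices of degree} \ge \ell)\}$, because every out-neighbor of $u$ has degree at least $\ell$ (ties broken lexicographically add only lower-order subtleties). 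By Lemma~\ref{clm:plb_bound}(a), the number of vertices of degree at least $\ell$ is $O(n\ell^{-\gamma+1})$, so $d^+(u) = O(\min\{\ell,\, n\ell^{-\gamma+1}\})$. The two expressions cross near $\ell^\gamma \approx n$, i.e.\ $\ell \approx n^{1/\gamma}$; this is exactly the degeneracy threshold $\alpha = O(n^{1/\gamma})$ noted before the theorem, and it is where both the per-vertex work and the contribution structure change character.

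First I would split the vertices into the ``low-degree'' part $L = \{u : d(u) \le n^{1/\gamma}\}$ and the ``high-degree'' part $H = \{u : d(u) > n^{1/\gamma}\}$. For $u \in H$ we use $d^+(u) = O(n \cdot d(u)^{-\gamma+1})$, so the work at such a vertex is $O\big(n^2 d(u)^{-2\gamma+2}\big)$; summing over $u \in H$ and grouping by dyadic degree ranges, the tail bound says there are $O(n\, 2^{r(-\gamma+1)})$ vertices with degree $\approx 2^r$, and one checks that the resulting geometric-type series over $r$ with $2^r > n^{1/\gamma}$ is dominated by its first term, giving total high-vertex work $O(n^{3/\gamma})$ (here the condition $\gamma < 3$ is what makes the relevant exponent on $2^r$ positive so the sum is controlled by the smallest such $r$, namely $2^r \approx n^{1/\gamma}$). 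For $u \in L$ we instead use the cruder $d^+(u) \le d(u)$, so the work is $O(d(u)^2)$, and $\sum_{u \in L} d(u)^2 = \sum_{d \le n^{1/\gamma}} d^2 n(d)$, which by Lemma~\ref{lem:plb} (with $c = 2$) is $O\big(n \sum_{d \le n^{1/\gamma}} d^{2-\gamma}\big)$; evaluating this partial sum for $\gamma \in (2,3)$ gives $O\big(n \cdot (n^{1/\gamma})^{3-\gamma}\big) = O(n^{3/\gamma})$, matching the high-degree bound. Adding the two pieces (and the $O(m) = O(n)$ cost of building $D$ and of the outer loop bookkeeping, which is lower order for $\gamma < 3$ since then $m = O(n)$ when $\gamma>2$) yields the claimed $O(n^{3/\gamma})$.

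The step I expect to be the main obstacle is making the dyadic summation over the high-degree vertices fully rigorous: one has to invoke Lemma~\ref{clm:plb_bound}(a) at the granularity of a single dyadic band $[2^r, 2^{r+1})$ rather than as a cumulative tail, argue that the per-band count $O(n\,2^{r(1-\gamma)})$ multiplied by the per-vertex work $O\big((n\,2^{r(1-\gamma)})^2\big)$ really does sum geometrically, and verify that the dominant term sits precisely at the crossover $2^r \asymp n^{1/\gamma}$ for every $\gamma$ in the open interval $(2,3)$ — the endpoints behaving differently is the reason the theorem is stated on the open interval. A secondary technical nuisance is the lexicographic tie-breaking: two vertices of equal degree $\ell$ can each orient toward the other's ``side,'' so the bound $d^+(u) \le (\text{\# vertices of degree} \ge \ell)$ needs the trivial slack $+O(1)$ or a short argument that ties contribute only an $O(d(u))$ term already absorbed into the low-degree estimate. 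Neither of these changes the exponent, but both need to be handled cleanly for the $O(n^{3/\gamma})$ bound to be airtight.
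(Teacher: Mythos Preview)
Your approach is essentially identical to the paper's: the same running-time expression $O(n+\sum_v (d^+_v)^2)$, the same two upper bounds on $d^+_v$ (namely $d_v$ and $\sum_{d\ge d_v} n(d)=O(n d_v^{-\gamma+1})$), the same split at the threshold $n^{1/\gamma}$, and the same application of Lemma~\ref{lem:plb} with $c=2$ for the low-degree part.

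The one place you overcomplicate things is the high-degree sum, which you flag as your ``main obstacle.'' The paper avoids dyadic banding entirely: since $-2\gamma+2<0$, one simply bounds $d^{-2\gamma+2} \le (n^{1/\gamma})^{-2\gamma+2}$ uniformly for every $d > n^{1/\gamma}$, pulls this constant factor out, and applies the tail bound of Lemma~\ref{clm:plb_bound}(a) once to the remaining sum $\sum_{d>n^{1/\gamma}} n(d)$. This yields $O(n^{3/\gamma})$ in two lines and sidesteps the per-band geometric series you were worried about. (Incidentally, your attribution of the hypothesis $\gamma<3$ to the high-degree sum is misplaced; it is the low-degree sum $\sum_{d\le n^{1/\gamma}} d^{2-\gamma}$ that diverges, and hence is dominated by its largest term, precisely when $\gamma<3$. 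The high-degree estimate only needs $\gamma>1$.)
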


\begin{proof} 
Let~$G=(V,E)$ denote an $n$-vertex PLB graph with exponent $\gamma \in
(2,3)$.  Denote the degree of vertex $v$ in $G$ by $d_v$ and its
out-degree in the directed graph~$D$ by~$d^+_v$.
The running time of Algorithm~2 is $O(n+ \sum_v {d^+_v \choose 2}) =
O(n + \sum_v (d^+_v)^2)$, so the analysis boils down to
bounding the out-degrees in~$D$. 
One trivial upper bound is $d^+_v \leq d_v$ for every~$v \in V$. 
Because every edge is directed from its lower-degree endpoint to its
higher-degree endpoint, we also have
$d^+_v \leq \sum_{d \geq d_v} n(d)$.
By \Clm{plb_bound}(a), the second bound is $O(n d^{-\gamma+1}_v)$. 
The second bound is better than the first roughly when $d_v \geq n
d^{-\gamma+1}_v$, or equivalently when $d_v \geq n^{1/\gamma}$.

Let $V(d)$ denote the set of degree-$d$ vertices of~$G$. We split the
sum over vertices according to how their degrees compare to $n^{1/\gamma}$,
using the first bound for low-degree vertices and the second bound for
high-degree vertices:
\begin{eqnarray*}
\sum_{v \in V} (d^+_v)^2 & = & \sum_d \sum_{v \in V(d)} (d^+_v)^2\\
 & \leq & \sum_{d \leq n^{1/\gamma}} \sum_{v \in V(d)} d^2 + \sum_{d > n^{1/\gamma}} \sum_{v \in V(d)} O(n^2 d^{-2\gamma+2}) \nonumber \\
& = & \sum_{d \leq n^{1/\gamma}} d^2 \cdot n(d) + O\left(n^2 \cdot \sum_{d
      > n^{1/\gamma}} d^{-2\gamma+2} \cdot n(d)\right).
\end{eqnarray*}

Applying \Lem{plb} (with $c=2$) to the sum over low-degree vertices,
and using the fact that with $\gamma < 3$ the sum $\sum_d d^{2-\gamma}$ is
divergent, we derive
\[
\sum_{d \leq n^{1/\gamma}} d^2 \cdot n(d) = O\left(n \sum_{d \leq n^{1/\gamma}} d^{2-\gamma}\right) = O(n (n^{1/\gamma})^{3-\gamma}) = O(n^{3/\gamma}).
\]

The second sum is over the highest-degree vertices, and \Lem{plb} does
not apply. On the other hand, we can invoke \Clm{plb_bound}(a) to obtain
the desired bound:
\begin{eqnarray*}
n^2 \sum_{d > n^{1/\gamma}} d^{-2\gamma+2} \cdot n(d) & \leq & n^2 (n^{1/\gamma})^{-2\gamma+2} \sum_{d > n^{1/\gamma}} n(d)\\
& = & O(n^{2/\gamma} \cdot n(n^{1/\gamma})^{-\gamma+1})\\
& = & O(n^{3/\gamma}).
\end{eqnarray*}
\end{proof}

The same reasoning shows that Algorithm~2 runs in~$O(n \log n)$ time
in $n$-vertex PLB graphs with exponent $\gamma=3$, and in~$O(n)$ time
in PLB graphs with $\gamma > 3$ (Exercise~\ref{exer:gamma3}).

\subsection{Discussion}\label{ss:disc}

Beyond triangle counting, which computational problems should we
expect to be easier on
PLB graphs than on general graphs?  A good starting point is problems that
are relatively easy on bounded-degree graphs.  In many cases,
fast algorithms for bounded-degree graphs remain fast for graphs with
bounded degeneracy.  In these cases, the degeneracy bound for PLB
graphs (Exercise~\ref{exer:plbdegen}) can already lead to fast
algorithms for such graphs.  For example, this approach can be used to
show that all of the cliques of a PLB graph with exponent $\gamma > 1$
can be enumerated in subexponential time (see Exercise~\ref{exer:clique}).
In some cases, like in Theorem~\ref{thm:tricount_cn}, one can beat the
bound from the degeneracy-based analysis through more refined
arguments.

\section{The BCT Model}\label{s:bct}

\newcommand{\ecc}{\mathrm{ecc}}
\newcommand{\twosweep}{{\tt TwoSweep}}
\newcommand{\dist}{\mathrm{dist}}
\newcommand{\eqdef}{:=}

This section gives an impressionistic overview of another set of
deterministic conditions meant to capture properties of ``typical
networks,'' proposed by \citet{BCT17} and hereafter called the {\em
  BCT model}.  The precise model is 
technical with a number of parameters; we give only a high-level
description that ignores several complications. 

To illustrate the main ideas, consider the problem of computing the
{\em diameter} $\max_{u,v \in V} \dist(u,v)$ of an undirected and
unweighted $n$-vertex graph~$G=(V,E)$, where $\dist(u,v)$ denotes the
shortest-path distance between $u$ and $v$ in $G$.  Define the
\emph{eccentricity} of a vertex~$u$ by
$\ecc(u) \eqdef \max_{v \in V} \dist(u,v)$, so that the diameter is
the maximum eccentricity.  The eccentricity of a single vertex can be
computed in linear time using breadth-first search, which gives a
quadratic-time algorithm for computing the diameter.
Despite much 
effort, no subquadratic $(1+\eps)$-approximation algorithm
for computing the graph diameter is known for general graphs. 
Yet there are many heuristics that
perform well in real-world networks. 
Most of these heuristics compute the
eccentricities of a carefully chosen subset of vertices. An extreme
example is the \twosweep{} algorithm:
\begin{enumerate}

\item  Pick an arbitrary vertex $s$,
and 
perform breadth-first search from~$s$ to compute a vertex
$t \in \argmax_{v \in V} \dist(s,v)$. 

\item Use breadth-first search again to compute $\ecc(t)$ and return
  the result.

\end{enumerate}
This heuristic always produces a lower bound on a graph's diameter,
and in practice usually achieves a close approximation.
What properties of ``real-world'' graphs might explain this empirical performance?

The BCT model is largely inspired by the metric properties of random
graphs.
To explain, for a vertex $s$ and natural number $k$, let $\tau_s(k)$
denote the smallest length~$\ell$ so that there are at least $k$
vertices at distance (exactly) $\ell$ from $s$.  Ignoring the
specifics of the random graph model, the $\ell$-step neighborhoods (i.e.,
vertices at distance exactly~$\ell$) of a vertex in a random graph
resemble uniform random sets of size increasing with $\ell$.  We next use
this property to derive a heuristic
upper bound on $\dist(s,t)$.  Define $\ell_s \eqdef \tau_s(\sqrt{n})$ and
$\ell_t \eqdef \tau_t(\sqrt{n})$. Since the $\ell_s$-step neighborhood
of $s$ and the $\ell_t$-step neighborhood of $t$ act like random sets of size
$\sqrt{n}$, a birthday paradox argument implies that they intersect
with non-trivial probability.  If they do intersect, then $\ell_s +
\ell_t$ is an upper bound on $\dist(s,t)$.
In any event, we can adopt this inequality
as a deterministic graph property, which
can be tested against real network data.\footnote{The actual BCT model
  uses the upper bound $\tau_s(n^{x}) +
  \tau_t(n^{y})$ for $x + y > 1+\delta$, to ensure
  intersection with high enough
  probability.}

\begin{property}\label{prop:1}
For all $s,t \in V$, $\dist(s,t) \leq
  \tau_s(\sqrt{n}) + \tau_t(\sqrt{n})$.
\end{property}

One would expect this distance upper bound to be tight for pairs of
vertices that are far away from each other, and in a reasonably random
graph, this will be true for most of the vertex pairs.
This leads us to the next property.\footnote{We omit the exact definition
of this property in the BCT model, which is quite involved.}

\begin{property}\label{prop:2}
For all $s \in V$: for ``most" $t \in V$, $\dist(s,t) > \tau_s(\sqrt{n}) + \tau_t(\sqrt{n}) - 1$.
\end{property}

The third property posits a distribution on the $\tau_s(\sqrt{n})$
values. Let $T(k)$ denote the average $n^{-1} \sum_{s \in V} \tau_s(k)$.

\begin{property}\label{prop:3}
There are constants $c,\gamma > 0$ such that
the fraction of vertices~$s$ satisfying $\tau_s(\sqrt{n}) \geq
T(\sqrt{n}) + \gamma$ is roughly $c^{-\gamma}$.
\end{property}
A consequence of this property is that the largest value of
$\tau_s(\sqrt{n})$ is $T(\sqrt{n}) + \log_c n + \Theta(1)$.

As we discuss below, these properties will imply that simple
heuristics work well for computing the diameter of a graph. On the
other hand, these properties do not generally hold in real-world
graphs. The actual BCT model has a nuanced version of these
properties, parameterized by vertex degrees. In addition, the BCT
model imposes an approximate power-law degree distribution, in the
spirit of power-law bounded graphs (Definition~\ref{def:plb} in
Section~\ref{s:plb}). This nuanced list of properties can be
empirically verified on a large set of real-world graphs.

Nonetheless, for understanding the connection
of metric properties to diameter computation,
it suffices to look at Properties~\ref{prop:1}--\ref{prop:3}. We can now bound the
eccentricities of vertices. The properties imply that
\[
\dist(u,v) \leq \tau_u(\sqrt{n}) + \tau_v(\sqrt{n}) \leq
\tau_u(\sqrt{n}) + T(\sqrt{n}) + \log_c n + O(1).
\]
Fix $u$ and imagine varying $v$ to estimate $\ecc(u)$.  For ``most"
vertices $v$,
$\dist(u,v) \geq \tau_u(\sqrt{n}) + \tau_v(\sqrt{n}) - 1$.  By Property~\ref{prop:3},
one of the
vertices~$v$ satisfying this lower bound will also satisfy
$\tau_v(\sqrt{n}) \geq T(\sqrt{n}) + \log_c n - \Theta(1)$. Combining,
we can bound the eccentricity by
\begin{equation}\label{eq:ecc}
\ecc(u) = \max_v \dist(u,v) = \tau_u(\sqrt{n}) + T(\sqrt{n}) + \log_c
n \pm \Theta(1).
\end{equation}
The bound~\eqref{eq:ecc} is significant because it reduces maximizing
$\ecc(u)$ over $u \in V$ to maximizing $\tau_u(\sqrt{n})$.

Pick an arbitrary vertex $s$ and consider a vertex $u$ that maximizes
$\dist(s,u)$. By an argument similar to the one above (and because
most vertices are far away from~$s$), we expect that
$\dist(s,u) \approx \tau_s(\sqrt{n}) + \tau_u(\sqrt{n})$. Thus, a
vertex $u$ maximizing $\dist(s,u)$ is almost the same as a vertex
maximizing $\tau_u(\sqrt{n})$, which by~\eqref{eq:ecc} is almost the
same as a vertex maximizing $\ecc(u)$. This gives an explanation of why
the \twosweep{} algorithm performs so well.  Its first use of
breadth-first search identifies a vertex~$u$ that (almost) maximizes
$\ecc(u)$. The second pass of breadth-first search (from~$u$) then
computes a close approximation of the diameter.

The analysis in this section is heuristic, but it captures much of the
spirit of algorithm analysis in the BCT model.  These results for
\twosweep{} can be extended to other heuristics that choose a set of
vertices through a random process to lower bound the diameter.  In
general, the key insight is that most distances $\dist(u,v)$ in the
BCT model can be closely approximated as a sum of quantities that
depend only on either $u$ or $v$.

\section{Discussion}

Let's take a bird's-eye view of this chapter.  The big challenge in
the line of research described in this chapter is the formulation of
graph classes and properties
that both reflect real-world graphs and lead
to a satisfying theory. 
It seems unlikely that any one class of graphs will simultaneously
capture all the relevant properties of (say) social networks. 
Accordingly, this chapter described several graph classes that 
target specific empirically observed graph properties, each with
its own algorithmic lessons:
\begin{itemize}
    \item Triadic closure aids the computation of dense
      subgraphs.
    \item Power-law degree distributions aid subgraph counting.
    \item $\ell$-hop neighborhood structure influences the structure
      of shortest paths. 
\end{itemize}
These lessons suggest that, when defining a graph class to capture 
``real-world'' graphs, it may be important to keep a target
algorithmic application in mind.

Different graph classes differ in how closely the
definitions are tied to domain knowledge and empirically observed
statistics.
The $c$-closed and triangle-dense graph classes are in the spirit of
classical families of graphs (e.g., planar or bounded-treewidth
graphs), and they sacrifice precision in the service of generality,
cleaner definitions, and arguably more elegant theory.
The PLB and BCT frameworks take the opposite view: the graph
properties are quite technical and involve many parameters, and in
exchange tightly capture the properties of ``real-world''
graphs. These additional details can add fidelity to theoretical
explanations for the surprising effectiveness of simple heuristics.

A big advantage of combinatorially defined graph classes---a hallmark
of graph-theoretic work in theoretical computer science---is the ability
to empirically validate them on real data.  The standard statistical
viewpoint taken in network science has led to dozens of competing
generative models,
and it is nearly impossible to validate the details of 
such a model from network data.
The deterministic graph classes defined in this chapter give a much
more satisfying foundation for algorithmics on real-world graphs.

Complex algorithms for real-world problems can be useful, but
practical algorithms for graph analysis are typically based on simple
ideas like backtracking or greedy algorithms.  An ideal theory would
reflect this reality, offering compelling explanations 
for why relatively simple algorithms have such
surprising efficacy in practice.

We conclude this section with some open problems.

\begin{enumerate}

\item Theorem~\ref{t:c-closed} gives, for constant~$c$, a bound of
  $O(n^2)$ on the number of maximal cliques in a $c$-closed
  graph. \citet{cclosed} also prove a sharper bound of
  $O(n^{2(1-2^{-c})})$, which is asymptotically tight when
  $c=2$. Is it tight for all values of~$c$?  Additionally,
  parameterizing by the number of edges ($m$) rather than vertices~($n$), is the number of maximal cliques in a $c$-closed graph with
  $c=O(1)$ bounded by $O(m)$?  Could there be a linear-time algorithm
  for maximal clique enumeration for $c$-closed graphs with constant
  $c$?

\item Theorem~\ref{t:dense} guarantees the capture by a tightly-knit
  family of an~$O(\delta^4)$
  fraction of the triangles of a \td graph.  What is the best-possible
  constant in the exponent?  Can the upper bound be improved, perhaps
  under additional assumptions (e.g., about the distribution of the
  clustering coefficients of the graph, rather than merely about their
  average)?

\item \citet{UgBaKl13} observe that $4$-vertex subgraph counts in
  real-world graphs exhibit predictable and peculiar behavior. By
imposing conditions on 4-vertex subgraph counts
(in addition to triangle density), can one prove
decomposition theorems better than Theorem~\ref{t:dense}?

\item Is there a compelling algorithmic application for graphs that
can be approximated by tightly-knit families?

\item \citet{BeGlLe16} and \citet{TsPaMi17} defined the \emph{triangle
    conductance} of a graph, where cuts are measured in terms of the
  number of triangles cut (rather than the number of edges). Empirical
  evidence suggests that cuts with low triangle conductance give more
  meaningful communities (i.e., denser subgraphs) than cuts with low
  (edge) conductance.  Is there a plausible theoretical
  explanation for this observation?

\item A more open-ended goal is to use the theoretical insights
  described in this chapter to develop new and practical algorithms for
fundamental graph problems.

\end{enumerate}

\section{Notes}

The book by \citet{EK} is a good introduction to social networks
analysis, including discussions of heavy-tailed degree
distributions and triadic closure.
A good if somewhat outdated review of generative models for social and
information networks is \citet{ChFa06}.
The Enron email network was first studied by \citet{KY04}.

The definitions of $c$-closed and weakly $c$-closed graphs
(Definitions~\ref{d:closed}--\ref{d:weak}) are from
\cite{cclosed}, as is the fixed-parameter tractability result for the 
maximum clique problem (Theorem~\ref{t:c-closed}).  
\citet{ELS10} proved an analogous result
with respect to a different parameter, 
the degeneracy of the input graph.
The reduction from efficiently enumerating maximal cliques to bounding the
number of maximal cliques (Theorem~\ref{t:enum}) is from
\citet{T+77}.  Moon-Moser graphs and the Moon-Moser bound on the
maximum number of maximal cliques of a graph are from \citet{MM65}.

The definition of triangle-dense graphs (Definition~\ref{d:dense}) and
the inverse theorem for them (Theorem~\ref{t:dense}) are from
\citet{GRSjournal}.  The computation of the triangle density of the
Facebook graph is detailed by~\citet{UgKa+11}.

The definition of power law bounded graphs (Definition~\ref{def:plb})
first appeared in \citet{B+15} in the context of triangle counting,
but it was formalized and applied to many different problems by
\citet{plb}, including triangle counting
(Theorem~\ref{thm:tricount_cn}), clique enumeration
(Exercise~\ref{exer:clique}), and linear algebraic problems for matrices
with a pattern of non-zeroes that induces a PLB graph.
\citet{plb} also performed a detailed empirical analysis, validating
\Def{plb} (with small shifts~$t$) on real data. 
The degeneracy-parameterized bound for counting triangles is
essentially due to \cite{ChNi85}.

The BCT model (Section~\ref{s:bct}) and the fast algorithm for
computing the diameter of a graph are due to \citet{BCT17}.

\section*{Acknowledgments}

The authors thank Michele Borassi, Shweta Jain, Piotr Sankowski, and
Inbal Talgam-Cohen for their comments on earlier drafts of this
chapter.

\section*{Exercises}

\begin{enumerate}

\item \label{exer:elim}
Prove that a graph is weakly $c$-closed in the sense of
Definition~\ref{d:weak} if and only if its vertices can be ordered
$v_1,v_2,\ldots,v_n$ such that, for every $i=1,2,\ldots,n$, the
vertex~$v_i$ is $c$-good in the subgraph induced by
$v_i,v_{i+1},\ldots,v_n$.

\item \label{exer:maximal}
Prove that the backtracking algorithm in Section~\ref{ss:backtracking}
enumerates all of the maximal cliques of a graph.

\item \label{exer:density} 
Prove that a graph has triangle density $1$ if and only if it is a disjoint
union of cliques.

\item \label{exer:tripartite} Let~$G$ be the complete regular
  tripartite graph with~$n$ vertices---three vertex sets of size
  $\tfrac{n}{3}$ each, with each vertex connected to every vertex of
  the other two groups and none of the vertices within the same group.
\begin{exlist}
\item What is the triangle density of the graph?

\item What is the output of the cleaner (Section~\ref{ss:inversepf})
 when applied to this graph?  What is then the output of the extractor?

\item Prove that $G$ admits no tightly-knit family that contains a
  constant fraction (as $n \rightarrow \infty$) of the graph's
  triangles and uses only radius-1 clusters.

\end{exlist}

\item \label{exer:plb}
Prove \Clm{plb_bound}. 

\vspace{.25\baselineskip}
\noindent
[Hint: To prove~(a), break up the sum over degrees
into sub-sums between powers of $2$. Apply \Def{plb} to each sub-sum.]

\item \label{exer:lookup}
Implement Algorithm~2 from
Section~\ref{sec:tri_count}
in $O(\sum_v (d^+_v)^2 + n)$ time, where~$d^+_v$ is the
number of out-neighbors of~$v$ in the directed version~$D$ of~$G$,
assuming that the input $G$ is represented using only adjacency lists. 

\vspace{.25\baselineskip}
\noindent
[Hint: you may need to store the in- and out-neighbor lists of~$D$.]

\item \label{ex:degen} Prove that every graph with $m$ edges has
  degeneracy at most $\sqrt{2m}$.  Exhibit a family of graphs showing
  that this bound is tight (up to lower order terms).

\item \label{exer:plbdegen}
Suppose $G$ is a PLB graph with exponent $\gamma > 1$.
\begin{exlist}
    \item Prove that the maximum degree of $G$ is $O(n^{1/(\gamma-1)})$.
    \item Prove that the degeneracy is $O(n^{1/\gamma})$. 
\end{exlist}

\noindent
\vspace{.25\baselineskip}
[Hint: For~(b), use the main idea in the proof of
Exercise~\ref{ex:degen} and \Clm{plb_bound}.]

\item \label{exer:gamma3}
Prove that Algorithm~2 in Section~\ref{sec:tri_count} runs
  in~$O(n \log n)$ time and~$O(n)$ time in $n$-vertex PLB graphs with
  exponents $\gamma=3$ and $\gamma > 3$, respectively.

\item \label{exer:clique}
Prove that all of the cliques of a graph with degeneracy~$\alpha$ can
be enumerated in $O(n2^{\alpha})$ time. 
(By Exercise~\ref{exer:plbdegen}(b), this immediately gives a
subexponential-time algorithm for enumerating the cliques of a PLB graph.)

\end{enumerate}

\end{document}